\newtheorem{thm}{Theorem}[section]
\newtheorem{prop}[thm]{Proposition}
\newtheorem{theorem}[thm]{Theorem}
\theoremstyle{definition} 
\theoremstyle{remark}
\newcommand{\proper}{\mathsf}
\newcommand{\pE}{\proper{E}}
\newcommand{\pC}{\proper{C}}
\newcommand{\pN}{\proper{N}}
\newcommand{\argmin}{arg\,min}
\newcommand{\mv}[1]{{\boldsymbol{\mathrm{#1}}}}
\newcommand{\trsp}{\ensuremath{\top}}
\renewcommand{\O}{\mathcal{O}}
\newcommand{\V}{{\bf \proper{V}}}
\theoremstyle{remark}
\title{Efficient adaptive MCMC through precision estimation}
\author{Jonas Wallin}
\date{}
\begin{document}
\begin{center}
\textbf{{\huge
Efficient adaptive MCMC through precision estimation}}\\
\vspace{5mm}
{\Large \scshape{Jonas Wallin and David Bolin}}

\vspace{3mm}

\textit{Mathematical Sciences, Chalmers and University of Gothenburg}

\vspace{3mm}
\begin{minipage}{0.9\textwidth}
{\small
{\bf ABSTRACT.
A novel adaptive Markov chain Monte Carlo algorithm is presented.
The algorithm utilizes sparisity in the partial correlation structure of a density to efficiently estimate the covariance matrix through the Cholesky factor of the precision matrix. The algorithm also utilizes the sparsity to sample efficiently from both MALA and Metropolis Hasting random walk proposals. Further, an algorithm that estimates the partial correlation structure of a density is proposed. Combining this with the Cholesky factor estimation algorithm results in an efficient black-box AMCMC method that can be used for general densities with unknown dependency structure. The method is compared with regular empirical covariance adaption for two examples. In both examples, the proposed method's covariance estimates converge faster to the true covariance matrix and the computational cost for each iteration is lower.
}
\begin{flushleft}
{\bf Key words:}
MHRW, MALA, AMCMC, Online estimation, Cholesky estimation, Partial correlation
\end{flushleft}
}
\end{minipage}

\vspace{3mm}
\end{center}


\section{Introduction}
Markov chain Monte Carlo (MCMC) algorithms are widely used for sampling from complicated distributions. As the dimension of the distribution gets larger, tuning the parameters of the algorithms becomes both more important and  more difficult. Adaptive MCMC (AMCMC) methods address this issue by tuning the MCMC algorithm while it is running.

One of the most popular MCMC algorithms is the Metropolis Hasting random walk (MHRW) algorithm  \citep{tierney1994}. In the MHRW, new samples are proposed using the proposal kernel
$
\mathcal{Q}(\mv{x},.) = \proper{N}( \mv{0}, \mv{\Sigma}),
$
where $\mv{\Sigma}$ is a scaling matrix and $\proper{N}$ denotes the normal distribution. The optimal scaling matrix $\mv{\Sigma}$ is a rescaled version of the covariance matrix of $\pi$. One of the main reasons for the popularity of the algorithm is that it is easy to implement; however, even the optimal scaling of the covariance matrix decreases as $\O(n^{-1})$ where $n$ is the dimension of $\mv{x}$ \citep{roberts2001}.  This means that optimal proposals are small for large dimensions, which causes poor mixing of the MCMC chain.

An alternative to the MHRW algorithm is the Metropolis adjusted Langevian algorithm (MALA) \citep{grenander1994representations}. The MALA is generally more difficult to implement compared with the MHRW, as it proposal kernel is
$\mathcal{Q}(\mv{x},.) = \proper{N}( \frac{\mv{\Sigma}}{2}\nabla \log(\pi(\mv{x})), \mv{\Sigma})$. The advantage is that the optimal scaling matrix decreases with $\O(n^{-1/3})$, which is a big improvement compared with the MHRW scaling.

Using the covariance matrix, of $\pi$, for the proposal has a huge effect on the convergence of the algorithms
in practice. Unfortunately, the matrix is almost never known in advance. In the AMCMC framework one solves this by replacing the covariance matrix with the empirical covariance matrix (ECM) from previous samples in the MCMC algorithm \citep{haario2001}.
If the dimension of $\pi$ is large, using the ECM is problematic: It converges very slowly towards the true covariance matrix \citep{EAMCMC_Roberts} and the evaluation of the proposal density requires computing the inverse of the ECM. 
In large dimensionon, the inverse will often be the computational bottle neck of MH algorithm.

The computation of the inverse is so computationally expensive that it often dominates the computational cost for each iteration of the MCMC algorithm.

In this paper, we propose an AMCMC algorithm that reduces these two issues for a large class of high dimensional densities. The general idea is to estimate the Cholesky factor of the precision (inverse covariance) matrix instead of using the ECM directly. The precision matrix and its Cholesky factor will be sparse if the target density has a sparse partial correlation structure \citep{lauritzen1996graphical, rue2005gaussian}. If the Cholesky factor is sparse there are fewer elements to estimate than in the covariance matrix, resulting in faster convergence. Also, the evaluation of the proposal density becomes significantly faster by using the sparse Cholesky factor.

The algorithm is especially well suited for Bayesian hierarchical models since they are often constructed by a directed acycle graph (DAG), which defines a conditional dependency structure of the model.

A potential issue with the method is that it cannot be used if the partial correlation structure is unknown. To remedy this, we propose a second algorithm which estimates the conditional density structure, and uses this to define the sparsity of the Cholesky factor. Although partial correlation does not imply conditional independence and vice versa, it is reasonable to use this as a sparsity pattern for the proposal distribution. Combining this algorithm with the Cholesky estimation algorithm results in an efficient black-box AMCMC method that can be used on general densities with unknown dependency structure.

The article is composed as follows. In Section 2, regular covariance-based AMCMC for MHRW and MALA is presented and the new adaption method is introduced. In Section 3, the algorithm that updates the Cholesky factor in an MCMC iteration is presented. The method is applied to two problems in Section 4. In both problems, the densities are derived from discretizations of infinite dimensional models, and has a lot of conditional dependency structure. Finally, Section 5 contains a brief discussion of future work.

The code for the method and the examples are available at \citep{code2015}.
\section{Adapation of Metropolis Hastings algorithms}
The classical adaptive MHRW for an $n$-dimensional target distribution $\pi$, introduced by \cite{haario2001}, uses at iteration $i$ the proposal
$\mathcal{Q}_i( \mv{x},.) = \proper{N} \left( \mv{x}, \frac{2.38^2}{n} \mv{\Sigma}^{(i)} \right)$.
Here
\begin{equation}\label{eq:empcov}
\mv{\Sigma}^{(i)}  = \frac{1}{i} \sum_{j=1}^i \left(\mv{x}^{(j)}  - \bar{\mv{x}}^{(i)} \right) \left(\mv{x}^{(j)}  - \bar{\mv{x}}^{(i)} \right)^{T},
\end{equation}
where $\bar{\mv{x}}^{(i)} = \frac{1}{i} \sum_{j=1}^i  \mv{x}^{(j)}$ and $\{ \mv{x}_j \}_{j=1}^i$ are all previous $i$ samples. Typically the ECM $\mv{\Sigma}^{(i)}$ is adjusted with $\mv{I}\epsilon$ to ensure that the matrix is positive definite.
Similar adaptation can be used for the MALA and the proposal distribution is in that case
$$
\mathcal{Q}_i(\mv{x},.) = \proper{N}\left( \mv{x} +  \frac{\sigma^2}{2} \mv{\Sigma}^{(i)} \nabla \log( \pi ( \mv{x})) ,\sigma^2  \mv{\Sigma}^{(i)} \right),
$$
where $\sigma$ is a suitable scaling factor, that often also needs to be adapted. We will refer to the method above as covariance adaption. Algorithm \ref{alg:MALA_ecm} display an iteration of MALA with covariance adaption.

 \begin{algorithm}[t]
 \caption{MALA with covariance-based adaptation}
 \label{alg:MALA_ecm}
 \begin{algorithmic}[1]
 \Procedure{MALA}{$ \mv{x}, \mv{\Sigma}, \bar{\mv{x}}, \sigma, \pi, i$}
\State $\mv{L}  \gets Chol( \mv{\Sigma})$
\State $\mv{z} \gets \proper{N}(\mv{0},\mv{I}) $
\State $\mv{g} \gets  \nabla \log( \pi ( \mv{x}))$
\State $\mv{g}_l \gets  \mv{L}^{\trsp}\mv{g}$
 \State $\mv{x}^* \gets  \frac{\sigma^2}{2}\mv{\Sigma}  \mv{g} + \sigma \mv{L} \mv{z}$
 \State $\mv{g}^* \gets  \nabla \log( \pi ( \mv{x}^* ))$
 \State $\mv{g}^*_l \gets \mv{L}^{\trsp}\mv{g}^*$
 \State $U \gets \mathcal{U}(0,1)$
 \State
$\begin{aligned}[t]
\alpha \gets \log&(\pi(\mv{x}^*)) - \log(\pi(\mv{x})) - \frac{\sigma^2}{8}  (\mv{g}_l^* )^{\trsp} \mv{g}^*_l  \\
  &+ \frac{1}{2} (\mv{x} -\mv{x}^*)^{\trsp} (\mv{g}^*  +  \mv{g}) + \frac{\sigma^2}{8}  (\mv{g}_l)^{\trsp} \mv{g}_l
 \end{aligned}$
 \If{$\log(U) < \alpha$ }
 \State $\mv{x} \gets \mv{x}^*$
 \EndIf
 \State  $ \bar{\mv{x}} \gets \frac{i}{i+1}  \bar{\mv{x}} +  \frac{1}{i+1}\mv{x}$
 \State  $\mv{\Sigma} \gets \frac{i}{i+1}  \mv{\Sigma} +  \frac{1}{i+1}   \left(\mv{x}  - \bar{\mv{x}} \right) \left(\mv{x} - \bar{\mv{x}} \right)^{T}$
 \State \Return $\{\mv{x},\bar{\mv{x}},\mv{\Sigma} \}$
 \EndProcedure
 \end{algorithmic}
 \end{algorithm}

A simple way to reduce the computation time of the algorithm is to update the Cholesky factor of the ECM directly instead of the covariance matrix at line $15$ in Algorithm \ref{alg:MALA_ecm}.

The adaption method we propose uses the same proposals, but instead of estimating the covariance matrix it estimates the Cholesky factor of the precision matrix, denoted $\mv{L}$.  We denote this adaption method precision adaption.  The proposal for the precision adaptation is easily reconstructed from the equations above by replacing $\mv{\Sigma}^{(i)}$ with $(\mv{L}^{(i)}(\mv{L}^{(i)})^{\trsp})^{-1}$.  Algorithm~\ref{alg:MALA_prec} displays an iteration of MALA with precision adaptation. Estimating $\mv{L}$ is not as straightforward as estimating the covariance matrix with the ECM, and the crucial L-Update method at line $14$ in the algorithm will be described in the next section.

 \begin{algorithm}[t]
 \caption{MALA with precision-based adaptation}
  \label{alg:MALA_prec}
 \begin{algorithmic}[1]
 \Procedure{MALA}{$ \mv{x}, \mv{L}, \bar{\mv{x}}, \sigma, \pi,i,...$}
\State $\mv{z} \gets \proper{N}(\mv{0},\mv{I}) $
\State $\mv{g} \gets   \nabla \log( \pi ( \mv{x}))$
\State $\mv{g}_l \gets   \mv{L}^{-1}\mv{g} $
 \State $\mv{x}^* \gets \mv{L}^{-T} \left(\mv{L}^{-1} \frac{\sigma^2}{2}   \mv{g} + \sigma \mv{z} \right)$
 \State $\mv{g}^* \gets  \nabla \log( \pi ( \mv{x}^* ))$
 \State $\mv{g}^*_l \gets    \mv{L}^{-1}\mv{g}^* $
 \State $U \gets \mathcal{U}(0,1)$
 \State $\begin{aligned}[t]
\alpha \gets  \log&(\pi(\mv{x}^*)) - \log(\pi(\mv{x})) - \frac{\sigma^2}{8}  (\mv{g}_l^* )^{\trsp} \mv{g}^*_l   \\
&+ \frac{1}{2} (\mv{x} -\mv{x}^*)^{\trsp} (\mv{g}^*  +  \mv{g}) + \frac{\sigma^2}{8}  (\mv{g}_l)^{\trsp} \mv{g}_l
 \end{aligned}$

 \If{$\log(U) < \alpha$ }
 \State $\mv{x} \gets \mv{x}^*$
 \EndIf
 \State  $ \bar{\mv{x}} \gets \frac{i}{i+1}  \bar{\mv{x}} +  \frac{1}{i+1}\mv{x}$
 \State  $\mv{L} \gets$ \textproc{L-update}($\mv{x}-\bar{\mv{x}}$,...) \Comment{see Algorithm \ref{alg:Lupdate} for details}
 \State \Return $\{\mv{x},\bar{\mv{x}},\mv{L},... \}$
 \EndProcedure
 \end{algorithmic}
 \end{algorithm}

When introducing adaptation in the algorithm it is important to ensure that $\pi$ remains the stationary distribution of the algorithm. In Appendix \ref{sec:app_conv}, we adress this by connecting convergence of precision adaption to covariance adaption.

\section{Online estimation of the Cholesky factor}
In this section we present a method for online estimation of the Cholesky factor of a density's precision matrix. The method will be formulated so that it can take advantage of sparsity properties of the Cholesky factor, which determines the partial correlation structure.

The section is structured as follows. In Section \ref{seq:uChol}, we derive an algorithm producing an online least squares estimate, $\mv{L}^{(i)}$, of the Cholesky factor from $i$ samples from $ \pi$, assuming that the partial correlation structure of $\pi$ is known. After this, in Section \ref{sec:SparP}, an algorithm for estimating the conditional dependency structure of $\pi$ is derived, which can be used if the partial correlation structure is not known.

\subsection{Least squares estimation} \label{seq:uChol}
Assume that  $\mv{X}$ is a random vector with distribution $\pi$ and finite positive definite covariance matrix $\mv{\Sigma}$. Also assume,  without loss of generality, that $\pE(\mv{X}) = \mv{0}$. We will in this section formulate an estimation method for the Cholesky factor, $\mv{L}$, of the precision matrix $\mv{Q} = \mv{\Sigma}^{-1}$. This will be done by first finding a representation of $\mv{L}$ as a series of regression problems, which is a well-known method to estimate covariance matrices for normal distributions \citep[see for example][]{pourahmadi2011covariance}.
The theorem below shows that regressing $\mv{X}_j$ on $\mv{X}_{j+1:N}$ for $j=1,\ldots, N$ can be used to construct $\mv{L}$. This result is well-known when $\mv{X}$ is Normally distributed; however, we have not been able to find a proof without the normal assumption so we provide one here.

\begin{theorem}
Let $\mv{X}$ be an $N-$dimensional random variable with zero mean and positive definite covariance matrix $\mv{\Sigma}$. Define $\mv{T}$ as the upper triangular matrix with unit diagonal and  $\mv{T}_{j,j+1:N}= -\mv{\Sigma}^{-1}_{j+1:N,j+1:N} \mv{\Sigma}_{j,j+1:N}$
for $j=1,\ldots,N$. Further, define $\mv{D}$ as the diagonal matrix with $\mv{D}_{jj}=\mv{\Sigma}_{jj} - \mv{\Sigma}^{\trsp}_{j,j+1:N} \mv{\Sigma}^{-1}_{j+1:N,j+1:N}\mv{\Sigma}_{j,j+1:N}$. Then the Cholesky factor of $\mv{\Sigma}^{-1}$ is $\mv{L} =  \mv{T}^{\trsp} \mv{D}^{-1/2}$,
  and $\mv{T}_{j,j+1:N}$ is the solution to
\begin{equation}\label{eq:mse}
\argmin_{\mv{t}} \scalebox{0.9}{$\pE\left[\left(X_j - \sum_{k=1}^{N-j} t_{k} X_k\right)^2\right]$},
\end{equation}
\end{theorem}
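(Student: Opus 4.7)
The plan is to attack the two claims in reverse order, since the regression characterization of $\mv{T}_{j,j+1:N}$ gives the cleanest route into the Cholesky identity. First I would solve \eqref{eq:mse} by differentiating the quadratic objective in $\mv{t}$ and setting the gradient to zero. This yields the normal equations $\mv{\Sigma}_{j+1:N,j+1:N}\mv{t} = \mv{\Sigma}_{j+1:N,j}$, which is uniquely solvable because the sub-matrix $\mv{\Sigma}_{j+1:N,j+1:N}$ is positive definite as a principal sub-matrix of $\mv{\Sigma}$. The minimizer thus matches the entries prescribed for $\mv{T}_{j,j+1:N}$ (up to the sign convention that makes $\mv{T}$ act as a residual operator on $\mv{X}$).

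Next I would introduce the residual vector $\mv{Y} = \mv{T}\mv{X}$, so that $Y_j = X_j - \hat X_j$ where $\hat X_j = \mv{\Sigma}^{\trsp}_{j,j+1:N}\mv{\Sigma}^{-1}_{j+1:N,j+1:N}\mv{X}_{j+1:N}$ is the $L^2$ linear projection of $X_j$ onto $\mathrm{span}(X_{j+1},\ldots,X_N)$. Two second-moment facts drive the proof, both proved by direct computation using only $\mv{\Sigma}$ (no Gaussian assumption): $(i)$ $\pE[Y_j^2] = \mv{\Sigma}_{jj} - \mv{\Sigma}^{\trsp}_{j,j+1:N}\mv{\Sigma}^{-1}_{j+1:N,j+1:N}\mv{\Sigma}_{j,j+1:N} = \mv{D}_{jj}$, expanded by writing out $\pE[(X_j-\hat X_j)^2]$; and $(ii)$ $\pE[Y_j X_k] = 0$ for every $k>j$, which is simply the first-order optimality of the regression. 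Combining $(ii)$ with the fact that $Y_k$ is a linear combination of $X_k,\ldots,X_N$ for $k>j$ gives $\pE[Y_j Y_k]=0$, so $\pE[\mv{Y}\mv{Y}^{\trsp}] = \mv{D}$, i.e.\ $\mv{T}\mv{\Sigma}\mv{T}^{\trsp} = \mv{D}$.

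Finally I would invert this identity. Since $\mv{T}$ has unit diagonal it is nonsingular, and $\mv{D}$ is positive definite because the regression residuals have strictly positive variance (again using positive definiteness of $\mv{\Sigma}$). Therefore
\begin{equation*}
\mv{\Sigma}^{-1} = \mv{T}^{\trsp}\mv{D}^{-1}\mv{T} = \bigl(\mv{T}^{\trsp}\mv{D}^{-1/2}\bigr)\bigl(\mv{T}^{\trsp}\mv{D}^{-1/2}\bigr)^{\trsp} = \mv{L}\mv{L}^{\trsp}.
\end{equation*}
Because $\mv{T}^{\trsp}$ is lower triangular and $\mv{D}^{-1/2}$ is diagonal with positive entries, $\mv{L}$ is lower triangular with positive diagonal, so by uniqueness of the Cholesky factorization it is the Cholesky factor of $\mv{\Sigma}^{-1}$.

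The only delicate point is step $(ii)$: the standard textbook proof uses $\pE[X_j \mid \mv{X}_{j+1:N}]$ and invokes the tower property, which tacitly assumes Gaussianity (or at least linear conditional expectations). The main obstacle is therefore to make sure every step uses only linear $L^2$ projection and the first-order optimality condition, rather than conditional expectation, so that the argument goes through for any distribution with finite positive definite covariance.
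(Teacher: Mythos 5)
Your proof is correct and follows essentially the same route as the paper's: both hinge on setting $\mv{\epsilon}=\mv{T}\mv{X}$ and showing that $\pC[\mv{\epsilon}]=\mv{D}$ is diagonal as a consequence of the optimality of the linear projection of $X_j$ onto $\mathrm{span}(X_{j+1},\ldots,X_N)$. The only differences are cosmetic and in your favour: you establish diagonality directly from the normal-equations orthogonality $\pE[Y_jX_k]=0$ for $k>j$ rather than by the paper's contradiction argument (that a nonzero $\mv{D}_{ij}$ would yield a strictly better linear predictor), and you spell out the final inversion $\mv{\Sigma}^{-1}=\mv{T}^{\trsp}\mv{D}^{-1}\mv{T}=\mv{L}\mv{L}^{\trsp}$ and the uniqueness of the Cholesky factor, steps the paper leaves implicit.
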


\begin{proof}
Let $\mv{\epsilon} \overset{d}{=} \mv{T}\mv{X}$ and $\mv{D} = \pC[\mv{\epsilon}]$. By construction, the variance of $\mv{\epsilon}_j$ is $\mv{D}_{jj}=\mv{\Sigma}_{jj} - \mv{\Sigma}^{\trsp}_{j,j+1:N} \mv{\Sigma}^{-1}_{j+1:N,j+1:N}\mv{\Sigma}_{j,j+1:N}$. Since $\mv{D} = \pC[\mv{\epsilon}] =  \pC[\mv{TX}] = \mv{T}^{\trsp}\mv{\Sigma}\mv{T}$, the proof is completed by can showing that $\mv{D}$ diagonal matrix. Assume that $\mv{D}$ is not a diagonal matrix and that $\mv{D}_{ij}$, $j<i$ is non zero. Thus
$$
 \scalebox{0.9}{$\V\left[\mv{\epsilon}_j - \frac{\mv{D}_{ij}}{\mv{D}_{ii}} \mv{\epsilon}_i\right] < \V[ \mv{\epsilon}_j ]$},
$$
or equivalently, by the definition of $\mv{\epsilon}$,
$$
\scalebox{0.88}{$\V\left[ \mv{X}_j  - \mv{T}_{j,j+1:N}^{\trsp} \mv{X}_{j+1:N}   - \frac{\mv{D}_{ij}}{\mv{D}_{ii}} \left( \mv{X}_i  - \mv{T}_{i,i+1:N}^{\trsp} \mv{X}_{i+1:N}  \right)\right] < \V[  \mv{X}_j  - \mv{T}_{j,j+1:N}^{trsp} \mv{X}_{1:j-1} ]$}.
$$
However  $ \mv{T}_{j,j+1:N} =  \mv{\Sigma}^{-1}_{j+1:N,j+1:N} \mv{\Sigma}_{j,j+1:N}$ which is the best linear predictor \citep[see][section 1.2]{stein1999interpolation}, and thus minimize \eqref{eq:mse} which is a contradiction.
\end{proof}

In the case when $\mv{\Sigma}$ is unknown and one has $i$ samples of $\mv{X}$, one can get the least square (LS) estimate of the Cholesky factor, $\hat{\mv{L}}$, by using the LS solution of \eqref{eq:mse} for each $j$. This is equivalent to replacing $\mv{\Sigma}$ in the theorem above with an empirical estimate $\hat{\mv{\Sigma}}$. Using the Delta method, one can show that $\hat{\mv{L}}$ converges to $\mv{L}$ as $i \rightarrow \infty$ \citep[e.g.][Theorem 20.8]{van2000asymptotic}.

In general, $\hat{\mv{L}}$ is the same matrix as the Cholesky factor of the inverse of the empirical covariance matrix. And in fact, computing $\hat{\mv{L}}$ by first computing $\hat{\mv{\Sigma}}^{-1}$ and then the cholesky factor require less computational effort than the regression method presented above. However, if $\pi$ has a sparse partial correlation structure, this can be utilized in the regression method to more efficiently construct $\hat{\mv{L}}$. Suppose, for example, that $\mv{X}_j$ and $\mv{X}_{j+2:n}$ are uncorrelated given $\mv{X}_{j+1}$. Then it follows that $\mv{T}_{j,j+2:n}=0$, and thus to compute $\mv{T}_{j,.}$ only the $1\times 1$ matrix $\mv{\Sigma}_{j+1,j+1}$ needs to be inverted, instead of the $N-j+1 \times N- j+1$  matrix $\mv{\Sigma}_{j+1:N,j+1:N}$.

Using these ideas we construct Algorithm \ref{alg:Lupdate}, that online updates $\mv{L}^{(i)}$ to $\mv{L}^{(i+1)}$ given a new observation $\mv{X}$. The list $A_j$ used in the algorithm is a set of indices such that the variable $\mv{X}_{j}$ is assumed to be partially uncorrelated with $\mv{X}_{\{j+1:N\} \setminus A_{j}}$ given $\mv{X}_{A_{j}}$. In line 6 of the algorithm, Sherman-Morisson formula \citep{bartlett1951} is used to update the lower triangular part of the inverse of $\mv{\Sigma}_{A_j,A_j}$. This reduces the complexity of each iteration in the for-loop from $\O(|A_j|^3)$ to  $\O(|A_j|^2)$. Thus, the algorithm has a computational complexity bounded by $\O(N \max_j |A_j|^2)$. This complexity should be compared to updating  $\mv{\Sigma}$ (or its Cholesky factor) which has a complexity of $\O(N^2)$.

 \begin{algorithm}[t]
 \caption{Cholesky factor updating}\label{alg:Lupdate}
 \begin{algorithmic}[1]
 \Procedure{L-update}{$\mv{X},\{\hat{\mv{\Sigma}}^{-1}_{A_j,A_j}, \hat{\mv{\Sigma}}_{A_j\cup j,j},A_j\}_{j=1}^N,i$}

\State $\mv{D} \gets \mv{0}_{N\times N}$.
\State $\mv{T} \gets diag(\mv{1}_{1 \times N})$
    \For{$j=1,\dots,N$}
    	\State $ \hat{\mv{\Sigma}}_{(A_j \cup j),j} \gets  \frac{i-1}{i}\hat{\mv{\Sigma}}_{(A_j \cup j),j} + \frac{\mv{X}^{rsp}_{(A_j \cup j)}\mv{X}_j}{i}$
    	\State  $\hat{\mv{\Sigma}}_{A_j,A_j}^{-1} \gets SM( \frac{i-1}{i}\hat{\mv{\Sigma}}_{A_j,A_j}^{-1}, \frac{\mv{X}_{A_j}}{i}) $ \Comment{Updating the matrix using Sherman-Morrsion}
    	\State $\mv{T}_{j,A_j}  \gets \hat{\mv{\Sigma}}_{A_j,A_j}^{-1}  \hat{\mv{\Sigma}}_{A_j,j} $
    	\State $ \mv{D}_{j,j}   \gets   \hat{\mv{\Sigma}}_{j,j} -  \hat{\mv{\Sigma}}^{\trsp}_{A_j,j} \mv{T}_{j,A_j} $
    \EndFor
    \State $\mv{L}  \gets \mv{T}^{\trsp}\mv{D}^{{-1/2}} $
    \State \textbf{return} $\{  \mv{L},\hat{\mv{\Sigma}}^{-1}_{A_j,A_j}, \hat{\mv{\Sigma}}_{(A_j \cup j),j} , \}_{j=1}^N$
 \EndProcedure
 \end{algorithmic}
 \end{algorithm}

Note that the algorithm is simple to parallelize since the iterations in the loop can be executed independently of each other.

\subsection{Finding the conditional sparsity structure}\label{sec:SparP}
So far it has been assumed that we know the partial correlation structure of the target distribution, which is needed to construct the set $\{A_i\}_{i=1}^N$ in Algorithm \ref{alg:Lupdate}. However, it is often not practically feasible to derive the partial correlation structure, which is problematic if we want to use Algorithm \ref{alg:Lupdate}. Because of this, we will in this section present an algorithm that estimates the set $\{A_i\}_{i=1}^N$.

Instead of estimating the partial correlation structure directly, we estimate the conditional dependence structure and then use this to construct a sparity pattern $\{A_i\}_{i=1}^N$. Although conditional independence does not imply that the variables are partially uncorrelated \citep[see][for a counter example]{wermuth1998statistical} its seems reasonable to use the conditional dependece structure if no other knowledge is available. Here it is also important to stress that the adaptation algorithm is valid even without using the correct parital correlation structure. In fact, one can use any sets $\{A_i\}_{i=1}^N$ in the algorithm and it for instance estimates, in a somewhat complicated way, a diagonal covariance matrix if $A_i=\{\emptyset\}$. Of course, in order to get good mixing one should use a dependency structure that contains the elements that are strongly correlated in the target density.

Below, we show how to estimate the dependence structure and then how to create  $\{A_i\}_{i=1}^N$ given the structure. However, we first need to define more precisely what we are estimating. To that end, we define variables $\mv{X}_i$ and $\mv{X}_j$ to be conditionally independent if $\,$ $\pi( \mv{X}_i| \mv{X}_{- \{i,j\}}, \mv{X}_j  ) = \pi( \mv{X}_i| \mv{X}_{- \{i,j\}} ) $. The conditional dependency structure of $\pi$ can be represented using an undirected graph $\mathcal{G}=(\{1,\ldots,N\},E)$, where $\{i,j\}$ is in the set of edges, $E$, if and only if $\mv{X}_i$ and $\mv{X}_j$ are conditionally dependent, see \cite{rue2005gaussian} for more details. Thus the goal first goal is to find $E$.

Let $\mv{e}_i$ represent the canonical basis vector, that is $e_{i,j} = 1$ for $j=i$ and $e_{i,j} = 0$ otherwise. The main idea is that one can estimate $E$ by
\begin{equation}\label{eq:Ehat}
\scalebox{0.9}{$\hat{E}=  \bigcup_{i=1}^N \left\{ \{i,j \} : \frac{\partial \log  \pi(\mv{x})}{\partial \mv{x}_j} -   \frac{\partial \log  \pi(\mv{x}+\mv{e}_i)}{\partial \mv{x}_j} \neq  0\right\} \setminus \{i,i \}$}.
\end{equation}

The following proposition proves that $\hat{E}\subseteq E$. That is, the estimated conditional dependency structure always has at most as many edges as the true dependency structure, but is often identical to $E$.

\begin{prop}\label{prop:spartpattern}
Let $\mv{X}$ have density $\pi$ and define $\hat{E}$ through \eqref{eq:Ehat}. If $\{i,j\}\in \hat{E}$ then $\mv{X}_{j}$ is not conditionally independent of $\mv{X}_i$ given $\mv{X}_{-\{i,j\}}$.
\end{prop}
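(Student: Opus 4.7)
The natural route is the contrapositive: assume $\mv{X}_i$ is conditionally independent of $\mv{X}_j$ given $\mv{X}_{-\{i,j\}}$, and show that the $i,j$-term in the union defining $\hat{E}$ is empty, i.e.\ that $\frac{\partial \log \pi(\mv{x})}{\partial x_j} = \frac{\partial \log \pi(\mv{x}+\mv{e}_i)}{\partial x_j}$ for every $\mv{x}$. That will directly imply $\{i,j\}\notin \hat{E}$.

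The first step is to invoke the standard factorisation characterisation of conditional independence: $\mv{X}_i \perp \mv{X}_j \mid \mv{X}_{-\{i,j\}}$ is equivalent to the existence of nonnegative functions $h_1,h_2$ with
\begin{equation*}
\pi(\mv{x}) \;=\; h_1\!\bigl(x_i,\, \mv{x}_{-\{i,j\}}\bigr)\, h_2\!\bigl(x_j,\, \mv{x}_{-\{i,j\}}\bigr).
\end{equation*}
Taking logarithms (assuming $\pi>0$ on the set where the derivatives in \eqref{eq:Ehat} are evaluated, which is implicit in the definition) gives $\log \pi(\mv{x}) = \log h_1(x_i,\mv{x}_{-\{i,j\}}) + \log h_2(x_j,\mv{x}_{-\{i,j\}})$.

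The second step is purely mechanical: differentiate with respect to $x_j$. The $h_1$-term has no $x_j$ dependence, so
\begin{equation*}
\frac{\partial \log \pi(\mv{x})}{\partial x_j} \;=\; \frac{\partial \log h_2(x_j,\mv{x}_{-\{i,j\}})}{\partial x_j}.
\end{equation*}
The right-hand side is a function of $x_j$ and $\mv{x}_{-\{i,j\}}$ only; in particular, it does not depend on the $i$-th coordinate. Evaluating at $\mv{x}+\mv{e}_i$ only changes the $i$-th coordinate, so the value is unchanged, giving $\frac{\partial \log \pi(\mv{x})}{\partial x_j} - \frac{\partial \log \pi(\mv{x}+\mv{e}_i)}{\partial x_j} = 0$, as required.

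There is not really a hard step here; the only subtlety is a regularity one. One must tacitly assume that $\pi$ is strictly positive and sufficiently smooth in $x_i,x_j$ for the partials in \eqref{eq:Ehat} to be defined, and that the factors $h_1,h_2$ inherit the same smoothness so that the derivative passes onto them coordinate-wise. Under these mild hypotheses, which are already implicit in writing down \eqref{eq:Ehat}, the argument above closes the proof by contrapositive.
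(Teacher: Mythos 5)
Your proof is correct and follows essentially the same route as the paper's: both decompose $\log\pi(\mv{x})$ into a summand free of $x_j$ (killed by the derivative) plus a summand free of $x_i$ (unchanged by the shift $\mv{e}_i$), the paper via $\log\pi(\mv{x})=\log\pi(x_j\mid \mv{x}_{-\{i,j\}})+\log\pi(\mv{x}_{-j})$ and you via the equivalent factorisation criterion $\pi = h_1 h_2$. Your explicit remark on the positivity and smoothness needed for the factorisation and the derivatives is a welcome addition that the paper leaves implicit.
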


\begin{proof}
Assume that $\{i,j\}\in E \cap \hat{E}$ and that $\mv{X}_j$ and $\mv{X}_i$ are conditionally independent. By the conditional independence
\begin{align*}
\log \pi(\mv{x}) &= \log  \pi(\mv{x}_j| \mv{x}_{-j}) + \log \pi(\mv{x}_{-j})  \\
&=\log  \pi(\mv{x}_j| \mv{x}_{- \{j,i\}}) + \log  \pi(\mv{x}_{-j}),
\end{align*}
and thus
\begin{align*}
\frac{\partial \log  \pi(\mv{x})}{\partial \mv{x}_j} = \frac{\partial \log  \pi(\mv{x}+\mv{e}_i)}{\partial \mv{x}_j},
\end{align*}
which contradicts $j \in  \hat{E}$.
\end{proof}

To find $\{A_i\}_{i=1}^N$ given $\hat{E}$, note that $\hat{E}$ is equivalent to a zero pattern of a symmetric matrix $\mv{\mathcal{Q}}$. From the non-zero pattern of $\mv{\mathcal{Q}}$ one can construct a non-zero pattern of its Cholesky factor.  The sparsity pattern of the Cholesky factor is not invariant to the ordering of the vector $\mv{X}$. For a sparse symmetric matrix one can in general reduce the number of non-zero elements in its Cholesky factor by using a clever ordering, and because of this we add a reodering step to the algorithm. There are several possible reordering methods one could use in this step, and we use an Approximate Minimum Degree (AMD) \citep{davis2006direct} method. Figure \ref{fig:reodering} displays the effect of using an AMD reordering for an example presented in the next section. The reodering reduced the number of symbolic non-zero elements in  $\mv{L}$ from $9689$ to $1380$.

Putting all steps together results in Algorithm \ref{alg:Aupdate}. It should be noted that the algorithm can be modified so that $\{A_j\}^N_{j=1}$  is updated online within an MCMC iteration. Also note that numerical round-of errors can cause the statement of Proposition \ref{prop:spartpattern} to fail when implementing the algorithm numerically, however we have not experienced any problems with this on the applications we have tested the method on so far.

\begin{algorithm}[h]
 \caption{Conditional dependence estimation}
 \label{alg:Aupdate}
 \begin{algorithmic}[1]
 \Procedure{A-update}{$\mv{X},\nabla \log\pi $}
 \State $E=\emptyset$
  \For{$i=1,\dots,N$}
    \State $\mv{dX} \gets \nabla \log\pi(\mv{X}) $
     \State $\mv{dX}^e \gets \nabla \log \pi(\mv{X} +  \mv{e}_i) $  \Comment{$\mv{e}$ is a canonical basis vector}
      \State  $E \gets E \cup \{\{i,j\} : \mv{dX}^e_j - \mv{dX}_j \neq 0    \}$
    \EndFor
    \State  $\mv{\mathcal{Q}} \gets \mbox{symbolicMatrix}( (\{1,\ldots,N\},E)$ \Comment{the matrix sparsity pattern}
    \State $\mv{r} \gets AMD(\mv{\mathcal{Q}} )$								\Comment{$\mv{r}$ is a reordering}
      \State $\mv{\mathcal{L}} \gets \mbox{Cholesky}(\mv{\mathcal{Q}} _{\mv{r}\mv{r}})$    \Comment{the resulting sparsity pattern}
      \For{$i=1,\dots,N$}
        \State  $A_i \gets  \{j: \mv{\mathcal{L}}_{i,\cdot} \neq 0   \} \setminus \{i\}$
         \EndFor
    \State \textbf{return} $\{A_i\}^N_{i=1},\mv{r}$
 \EndProcedure
 \end{algorithmic}
 \end{algorithm}

 \begin{figure}[t]
 \centering
 \includegraphics[width = 0.3\linewidth]{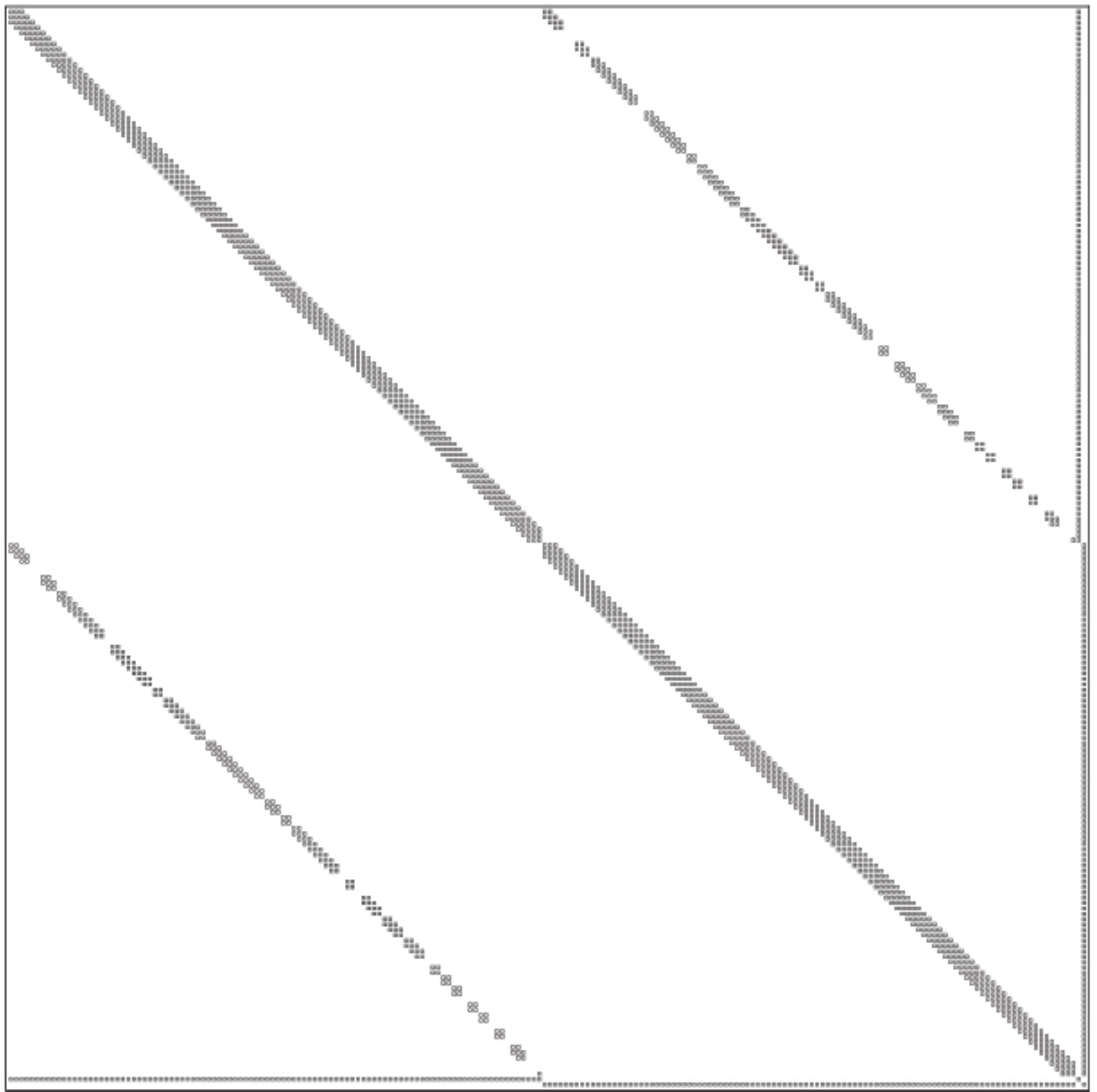}
 \includegraphics[width = 0.3\linewidth]{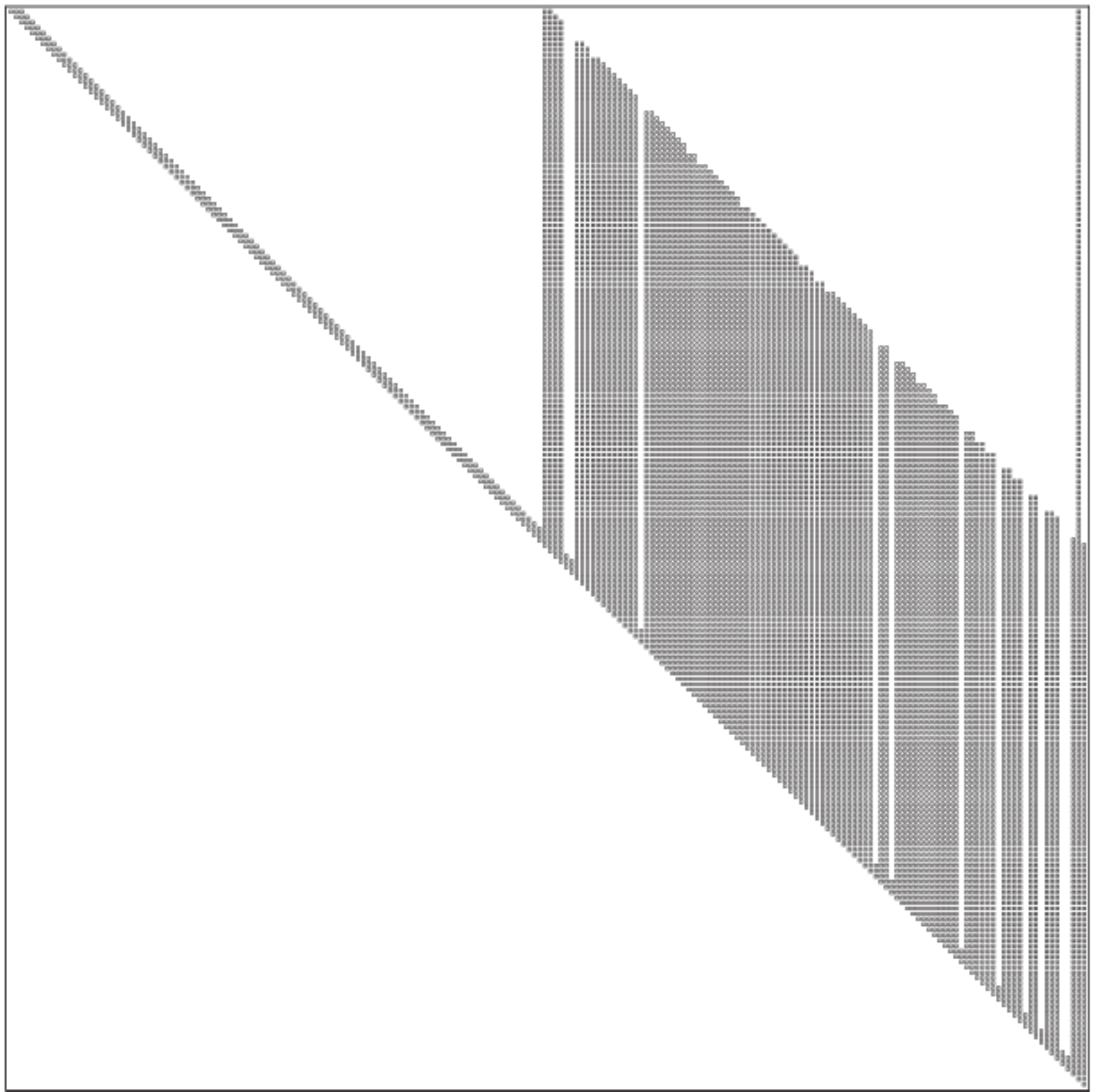}
 \includegraphics[width = 0.3\linewidth]{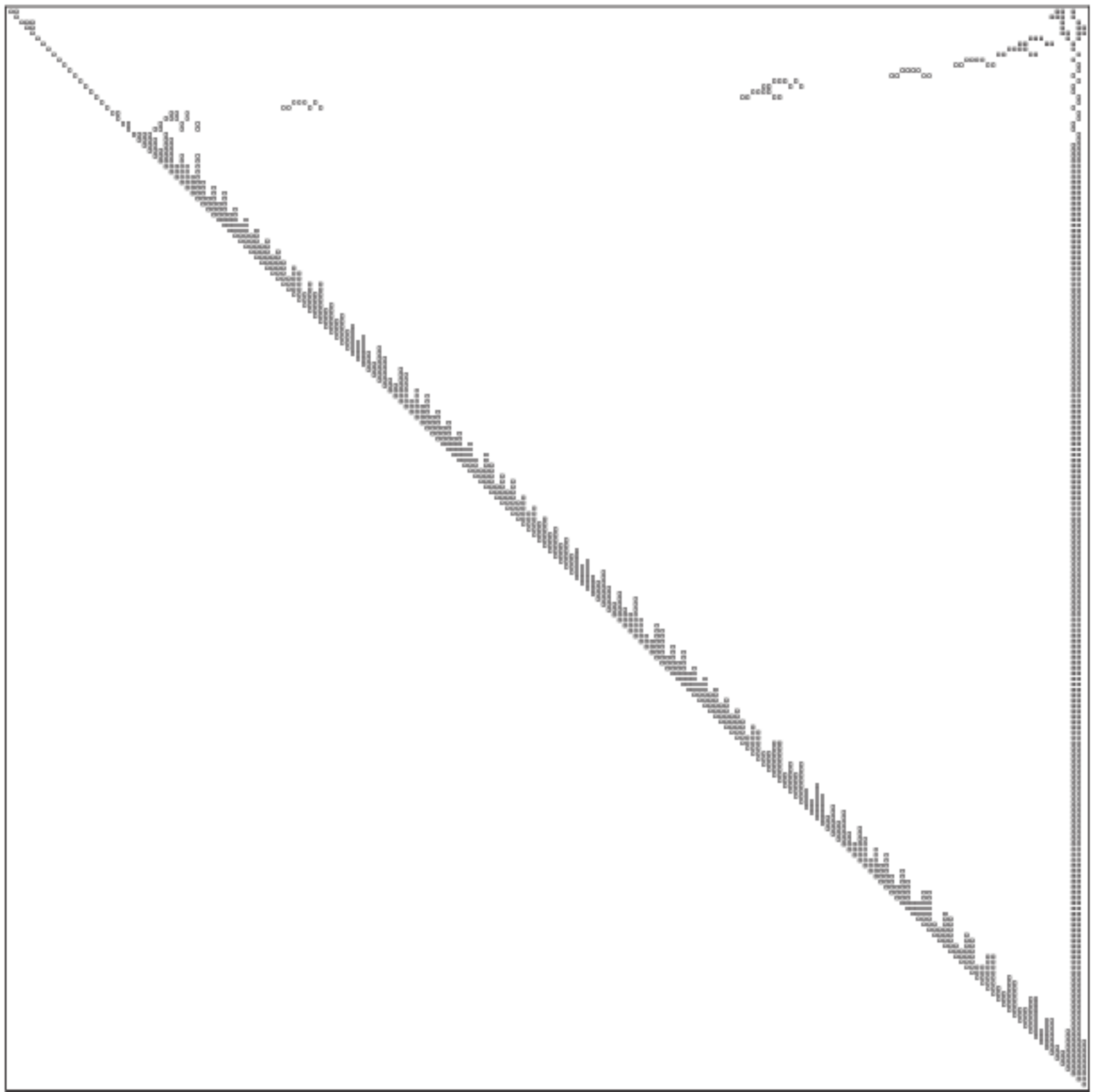}
 \caption{Conditional dependency structure for the spline model used in the example in Section \ref{sec:spline}. The sparsity pattern $\mv{\mathcal{Q}}$ (left), the sparsity pattern $\mv{\mathcal{L}}$ (middle), and the sparsity pattern $\mv{\mathcal{L}}$ after reordering the nodes using an AMD reodering (right). The number of non-zero elements in the matrices are $1860$ (left), $9689$ (middle), and $1380$ (right) and the size of the matrices are $200\times 200$. }
 \label{fig:reodering}
 \end{figure}

\section{Examples}
In this section, we compare the performance of the precision-based adaption method to the covariance adaptation method using two examples.

The purpose of the first example is two-folded: First, to show that the precision adaption converges to the correct scaling matrix and to show the rate of convergence. We use a Gaussian density because this gives an explicit value of (mixing) performance of a scaling matrix compared to the optimal mixing. The second purpose is to study the behavior of adaption under an increasing dimension of $\pi$. For this example we display only the results for the MALA since the results for the MHRW tells the same story.

For the second example, we have chosen a density with very sparse conditional dependence, but with strong correlation between the parameters. Here we compare adaption for both MHRW and MALA, and we also estimate the sparsity using the the conditional dependence since we do not know the partial correlation structure.

 In both examples we additionally adapt a scaling factor of the proposal, using the method proposed by \cite{EAMCMC_Roberts}, so that the acceptance rate is $0.234$ for MHRW and $0.574$ for MALA.

\subsection{An example from spatial statistics}
This first example is a latent Gaussian model taken from spatial statistics. Let $X({s})$, ${s}\in [0,1] \times [0,1]$, be a mean-zero Gaussian Mat\'ern fields represented as a solution to a stochastic partial differential equation
 $(\kappa^2-\Delta)^{\frac{\alpha}{2}} X(\mv{s}) =\mathcal{W}(\mv{s})$, where $\alpha$ is a shape parameter, $\kappa^2$ is the parameter controlling the correlation range of $X$, $\mathcal{W}$ is Gaussian white noise, and $\Delta$ is the laplacian. The field is discretized using a finite element method (FEM), resulting in an approximation
 \begin{equation}\label{eq:basis}
 X(\mv{s}) = \sum_{i=1}^n \varphi_i(\mv{s})x_i
 \end{equation}
 where $\{\varphi_i\}$ are piecewise linear basis functions induced by a triangulation of the domain and $\mv{x} = (x_1, \ldots, x_n)$ is a mean-zero Gaussian Markov random field with sparse precision matrix $\mv{Q}$. See \cite{lindgren2011explicit} for further details of the construction.

 The field is observed under Gaussian measurement noise at $100$ locations $\mv{s}_1, \ldots, \mv{s}_{100}$ chosen at random in the domain, resulting in observations $y_i = X(s_i) + \varepsilon_i$, where $\varepsilon_i$ are iid $\pN(0,\sigma^2)$ variables. The resulting posterior distribution for $\mv{x}$ given the observations is
 \begin{equation}\label{eq:spde2}
\scalebox{0.9}{$  \pi(\mv{x}|\mv{y})  \propto  \exp \left( - \frac{1}{\sigma^2} \left( \mv{y} - \mv{Ax} \right)^{\trsp} \left( \mv{y} - \mv{Ax} \right) - \frac{1}{2} \mv{x}^{\trsp} \mv{Q}^{-1} \mv{x} \right)$},
 \end{equation}
 where the matrix $\mv{A}$ is a sparse observation matrix that links the observation locations to the random weights $\mv{x}$ through $A_{ij} = \varphi_j(\mv{s}_i)$.

 The number of basis functions, $n$, controls the accuracy of the FEM approximation, and by increasing it, we can study how well the adaptation method scales. \cite{cotter2013} studied MCMC methods for discretizations of continuous functions and pointed out that many MCMC methods scales poorly with an increasing number of basis function. It is therefore interesting to compare precision adaption with regular covariance adaptation for this example when we increase $n$.

 Since both the prior and the likelihood are Gaussian, the posterior density of $\mv{x}|\mv{y}$ can be shown to be Gaussian with covariance matrix $ \mv{\Sigma} = \left( \mv{Q} +\frac{1}{\sigma^2}\mv{A}^{\trsp}\mv{A}  \right)^{-1}$. Thus, this covariance matrix is the optimal scaling matrix for the MALA. As a measure of how good another proposal matrix $\mv{\Sigma}_{p}$ is, we use
\begin{equation}\label{eq:b}
 b = n \frac{\sum_{i=1}^n \lambda_i}{(\sum_{i=1}^n \sqrt{\lambda_i})^2},
\end{equation}
 where $ \{\lambda_i \}$ are the eigenvalues of $\mv{\Sigma} \mv{\Sigma}_p^{-1}$. The optimal value of $b$ is one and the larger value the worse proposal matrix, see \cite{roberts2001} for details of this measure.

 Figure \ref{fig:GaussianAdaptation} displays the convergence of $b$ in AMCMC simulations of \eqref{eq:spde2} with $n=10^2, 20^2, 30^2,$ and $40^2$ basis functions in the FEM discretization. Although $40^2$ is a rather small dimension for a Gaussian random field, it is quite large for a black box adaptation method. In the figure, one can clearly see that the precision adaptation converges, at least initially, much faster towards a reasonable proposal matrix compared with the covariance adaptation method. Further, from Figure \ref{fig:GaussianTime} one sees that both sampling the MALA and computing the Cholesky factor the precision adaption scales much better with size of the dimension compared to the regular covariance adaption.

 In conclusion, we see that the precision adaption outperforms the regular covariance adaption by a large margin, although neither method is great without a good initial guess of the covariance matrix.

 \begin{figure}[ht]
 \begin{center}
 \includegraphics[width=0.35\linewidth]{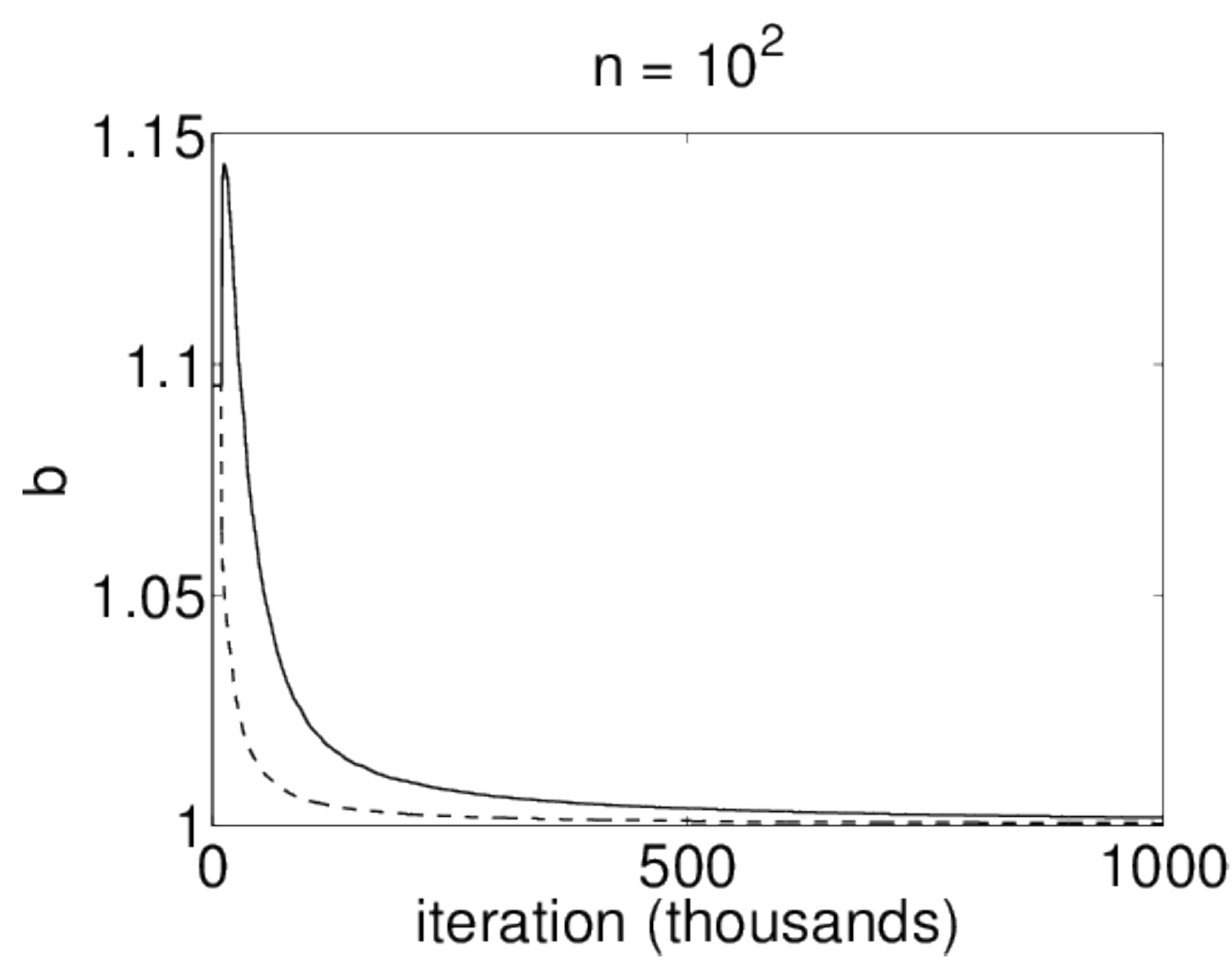}
 \includegraphics[width=0.35\linewidth]{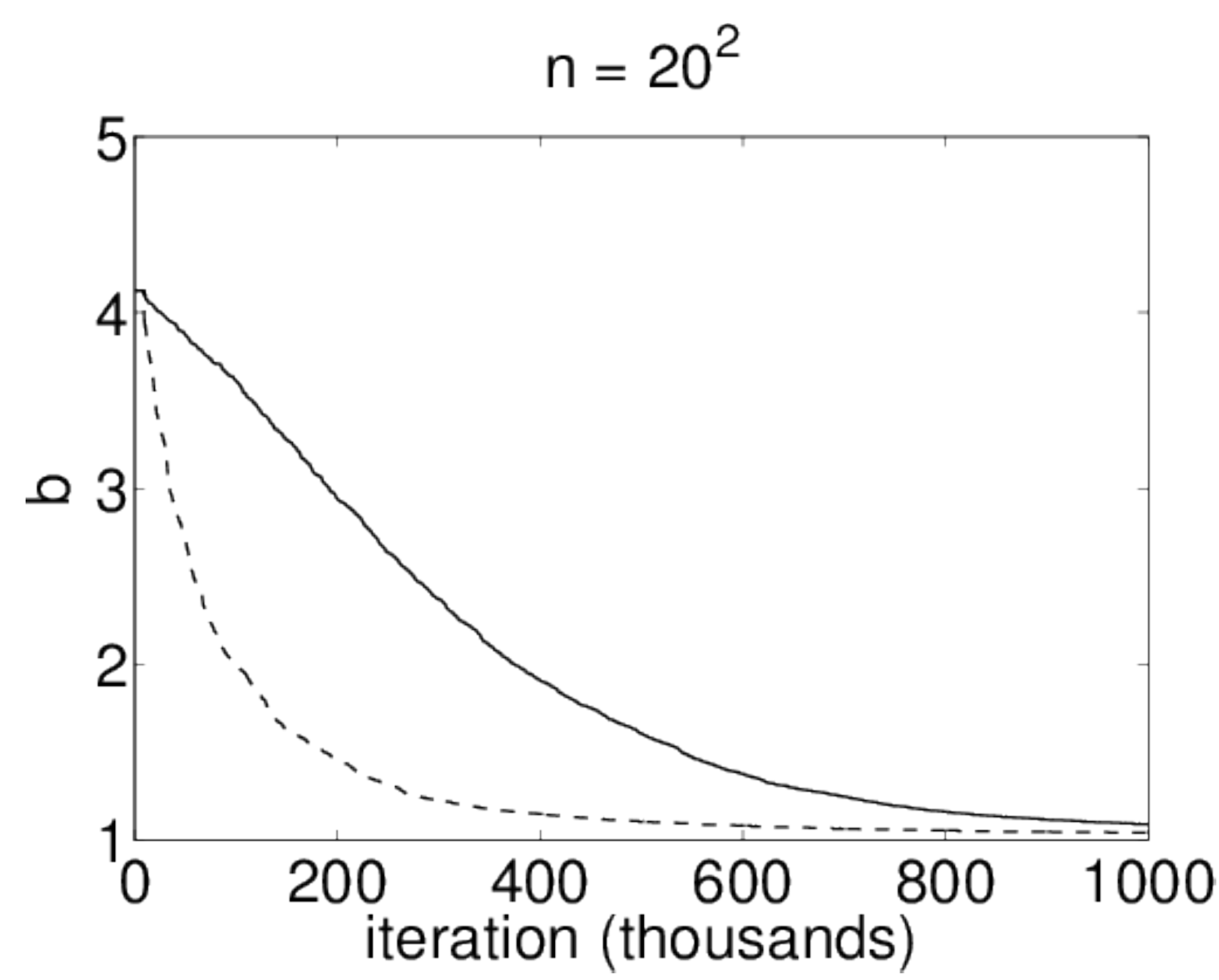}\\
 \includegraphics[width=0.35\linewidth]{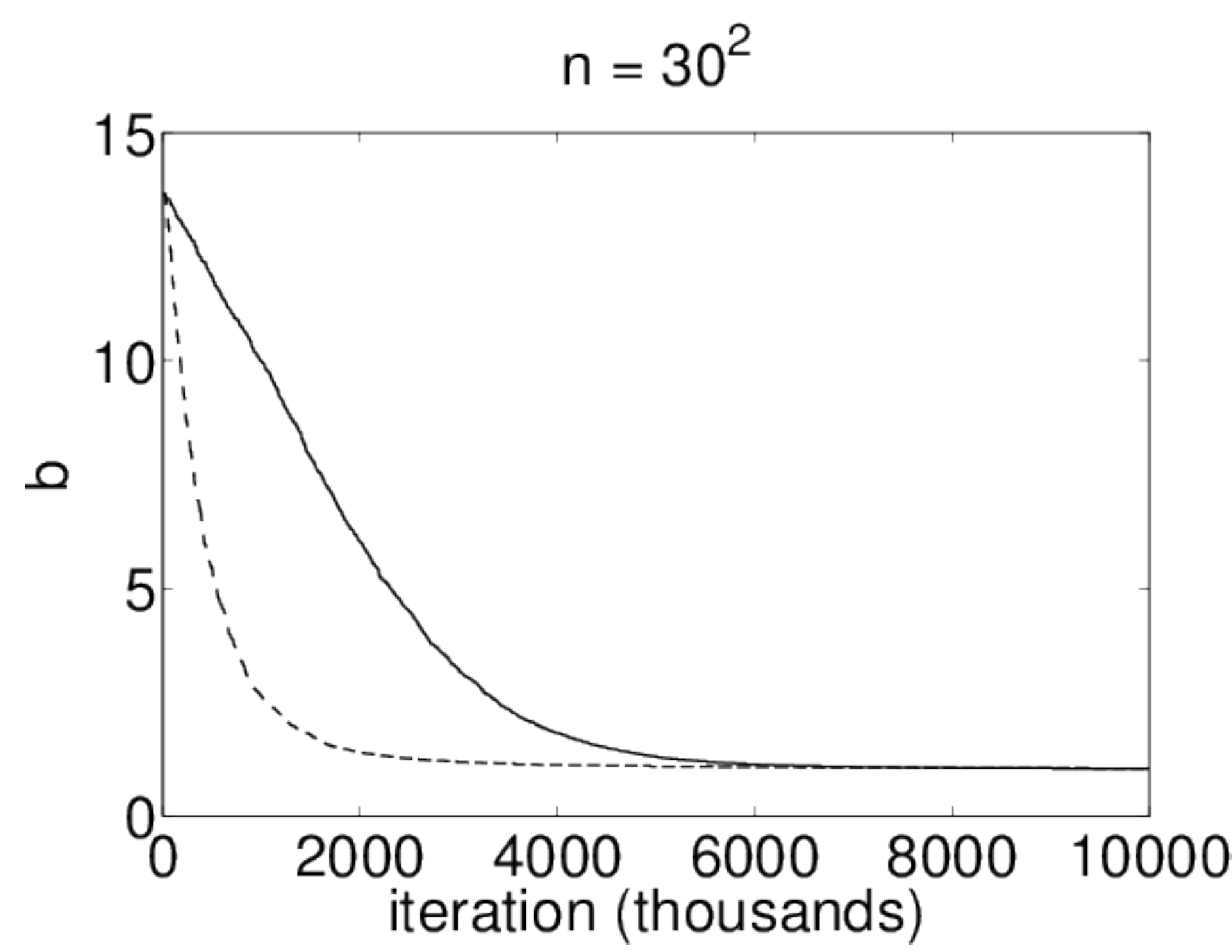}
 \includegraphics[width=0.35\linewidth]{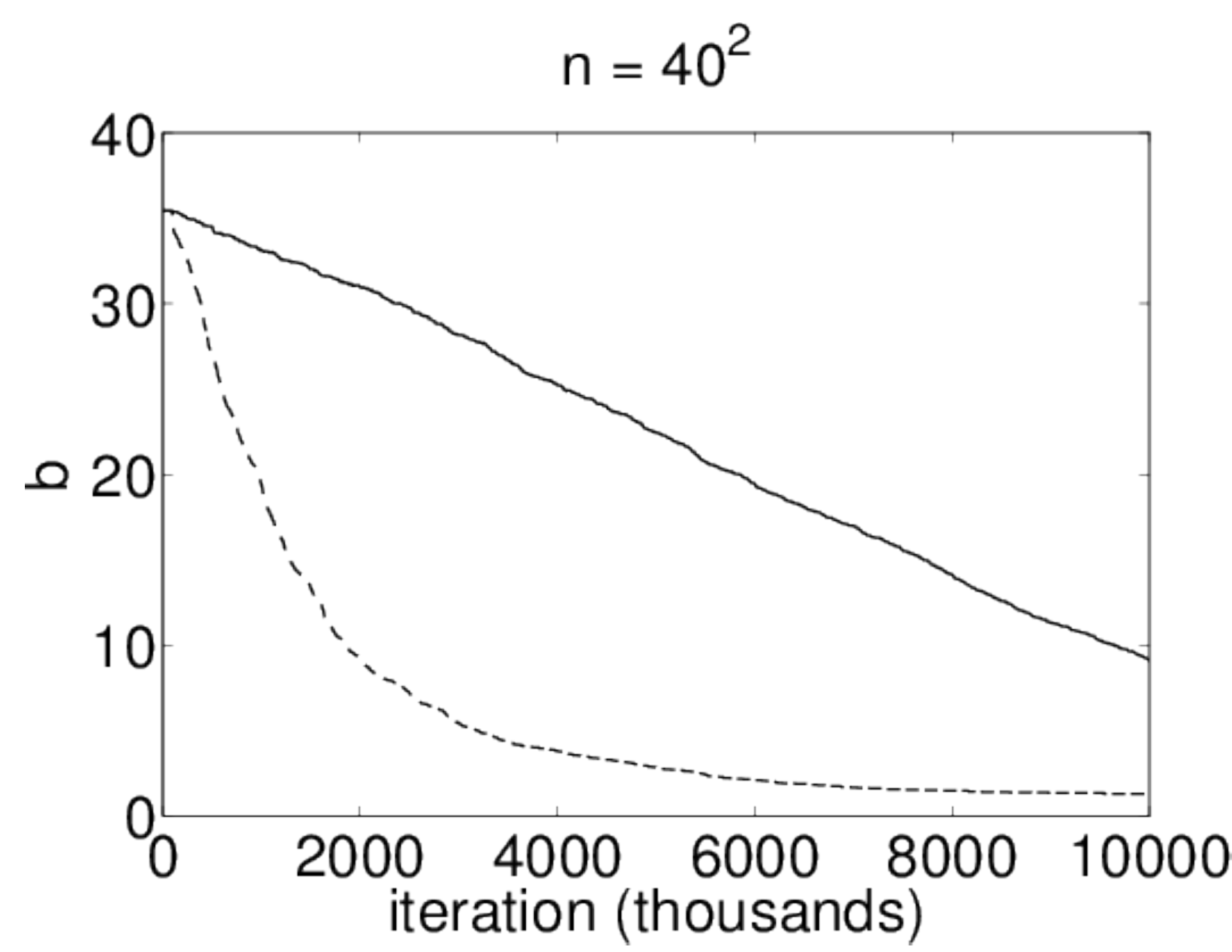}
 \end{center}
 \caption{The factor $b$ defined in equation \ref{eq:b}, which is a measure of how good the proposal is, for various discretization sizes. The optimal value of $b$ is $1$. The dashed line is the precision adaptation and the solid line is regular covariance adaptation. Note that the number of iterations varies between the plots at the bottom panels and the top panels.}
 \label{fig:GaussianAdaptation}
 \end{figure}

 \begin{figure}[ht]
 \begin{center}
 \includegraphics[scale=0.5]{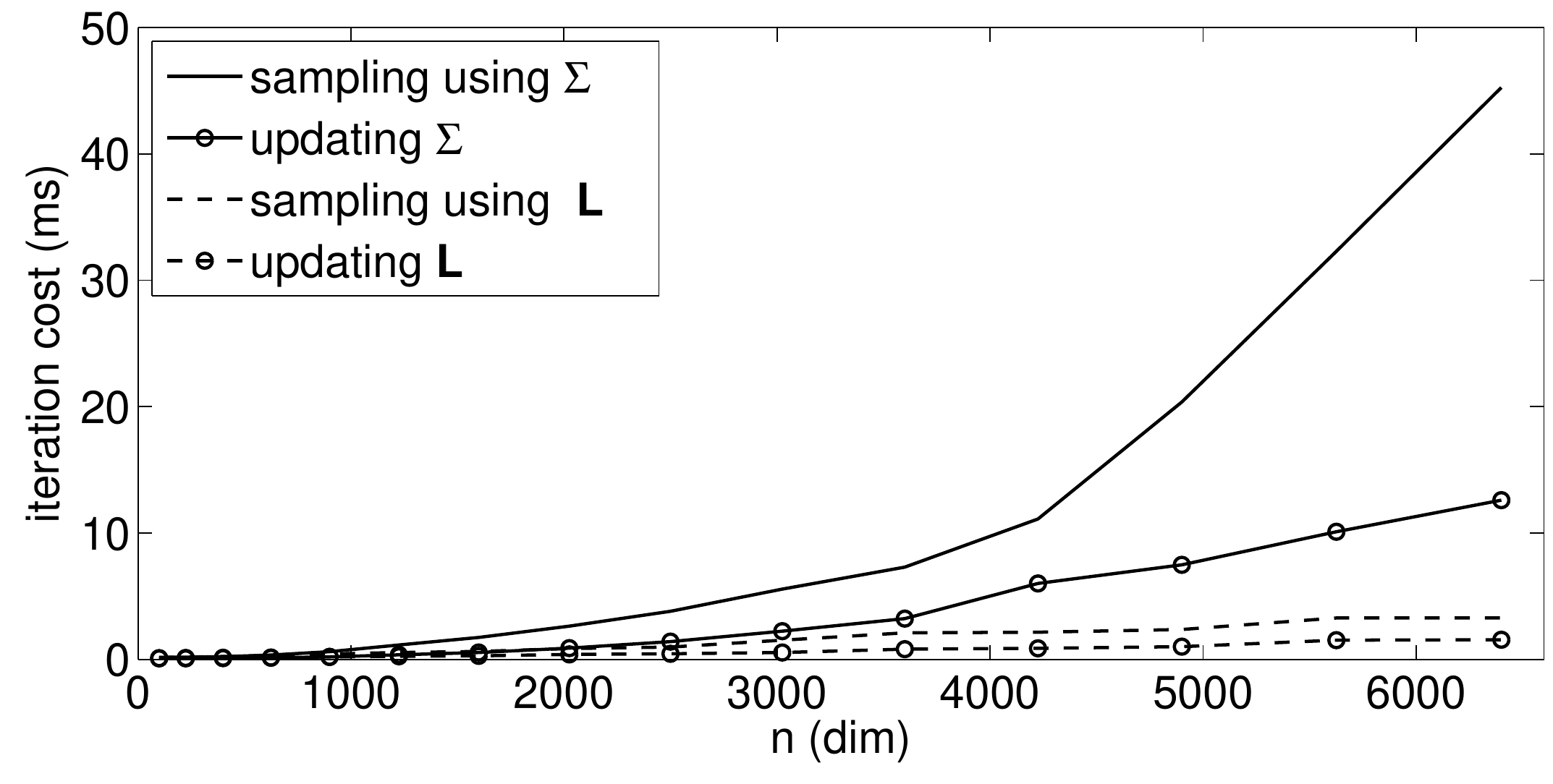}
 \end{center}
 \caption{The solid line shows average sampling time (in ms) of using covariance based scaling matrix as a function of the dimension of $\mv{x}$. The  dashed line shows the average sampling time using precision based scaling matrix. The  solid circled line shows the average time of updating the Cholesky factor of the covariance matrix. The dashed circled line shows the average time of updating the Cholesky factor of the precision matrix.}
 \label{fig:GaussianTime}
 \end{figure}

 \subsection{ Adaptive spline smoothing}\label{sec:spline}
 In this example we compare the adaption methods on a Bayesian model with a sparse conditional dependency structure. The example is an adaptive smoothing spline model with varying measurement error, and stochastic differential equations are again used to define the model. Let $X(t)$ be a twice differentiable second order random walk, defined as the solution to
 $$
 \tau_X \Delta X(t) =\mathcal{W}_1(t),
 $$
 where $\mathcal{W}_1$ is Gaussian white noise. As in the previous example, we measure $X$ under Gaussian noise, resulting in observations $Y(t_i) =  X(t_i) + \varepsilon_i$. The difference here is that we assume that the measurement noise $\varepsilon$ has a varying variance, $V(\varepsilon_i) = e^{2V(t_i)}$, where $V(t_i)$ is a second order random walk:
 $$
 \tau_V \Delta V(t) =  \mathcal{W}_2(t).
 $$
 Here $W_2(t)$ again is Gaussian white noise. As in the previous example, the differential equations are discretized using a FEM approach, described in detail by \cite{lindgren2008second}. The resulting joint posterior distribution is
 \begin{align}\label{eq:spde3}
\log \pi(\mv{x},\mv{v},\sigma_V^2,\sigma_X^2|\mv{y}) \propto& - \frac{1}{2}\left( \mv{y} - \mv{A}_x \mv{x} \right)^{\trsp} \mv{D}(e^{-2\mv{A}_v \mv{v}})  \left( \mv{y} -  \mv{A}_x \mv{x} \right) \nonumber\\
 &-\sum_{i=1}^{n} v_i - \frac{\tau_x}{2} \mv{x}^{\trsp} \mv{Q}_x \mv{x} -  \frac{\tau_v}{2} \mv{v}^{\trsp} \mv{Q}_v \mv{v}  \nonumber \\
 & +  \frac{n}{2} \log(\tau_v)  +\frac{n}{2} \log(\tau_x) - \tau_v  - \tau_x.
 \end{align}
 Here $\mv{D}(\mv{v})$ is a diagonal matrix with $\mv{v}$ on the diagonal. The matrices $\mv{A}_x$ and $\mv{A}_v$ are as in the previous example observation matrices that link the observation locations to the random weights $\mv{x}$ and $\mv{v}$ for the FEM discretizations. The matrices $\mv{Q}_x$ and $\mv{Q}_v$ are sparse tridiagonal matrices, given in explicit form in \citep{lindgren2008second}. Finally, the last terms in \eqref{eq:spde3} come from assuming exponential priors on the precision parameters.

 We apply the model to a classical data set of motor-cycle crashes, analysed by \cite{Silverman_spline}. The observations are accelerometer readings taken through time in simulated crashes used to test crash helmets. Figure \ref{fig:data2} displays the data points and it is clear that the variance is not constant over time.

 We use $250$ piecewise linear basis function for both $\mv{x}$ and $\mv{v}$ in the basis expansion \eqref{eq:basis}. In Figure  \ref{fig:data2}, the solid line is the posterior mean of the spline function $X(t)$ and the dashed lines show the posterior means of $X(t) \pm 1.96e^{-V(t)}$.

 To compare the adaption for MHRW and MALA, we study the convergence of the samples for $x_{21}$ (which corresponds to $X(6.8)$) and $\tau_{v}$. One expects the parameter $\tau_v$ to be more difficult to sample than $x_{21}$ since it is higher up in the hierarchical structure.

 In Figure \ref{fig:trace_tau} we display trace plots of ten million samples of $\log(\tau_v)$ for all adaption methods. It is apparent from the figure that the adapataions for MHRW did not converge before the algorithm was stopped, and it is not clear if the MALA method with covariance adaptation converged. Finally, MALA using precision adaptation converges to a steady state after approximately $2.5$ million samples. The results for $x_{21}$ in Figure \ref{fig:trace_X} tells a similar story.

 Table \ref{tab:comptime} displays the average cost of one iteration for each of the methods. Here it is worth pointing out that precision adapted MALA has a lower computational cost than regular MHRW with covariance adaption.

 The numbers should be put in relation with the number of basis functions used in the FEM approximations. The relative improvement of precision adaption compared with covariance adaption would increase if one were to use more basis functions, and it would decrease if fewer basis function were used.

 In conclusion, from the above result it is clear that one should use MALA with precision adaptation for this example.

 \begin{table}[ht]
   \begin{center}
     \begin{tabular}{lcc}
        & MHRW & MALA \\
       \hline
       covariance adaption & 0.67 & 1.32\\
       precision adaption & 0.49 & 0.59 \\       \hline
     \end{tabular}
   \end{center}
       \caption{Average time per iteration in ms for the different sampling-adapation schemes for the second example.}
       \label{tab:comptime}
 \end{table}

 \begin{figure}
 \centering
 \includegraphics[scale=0.45]{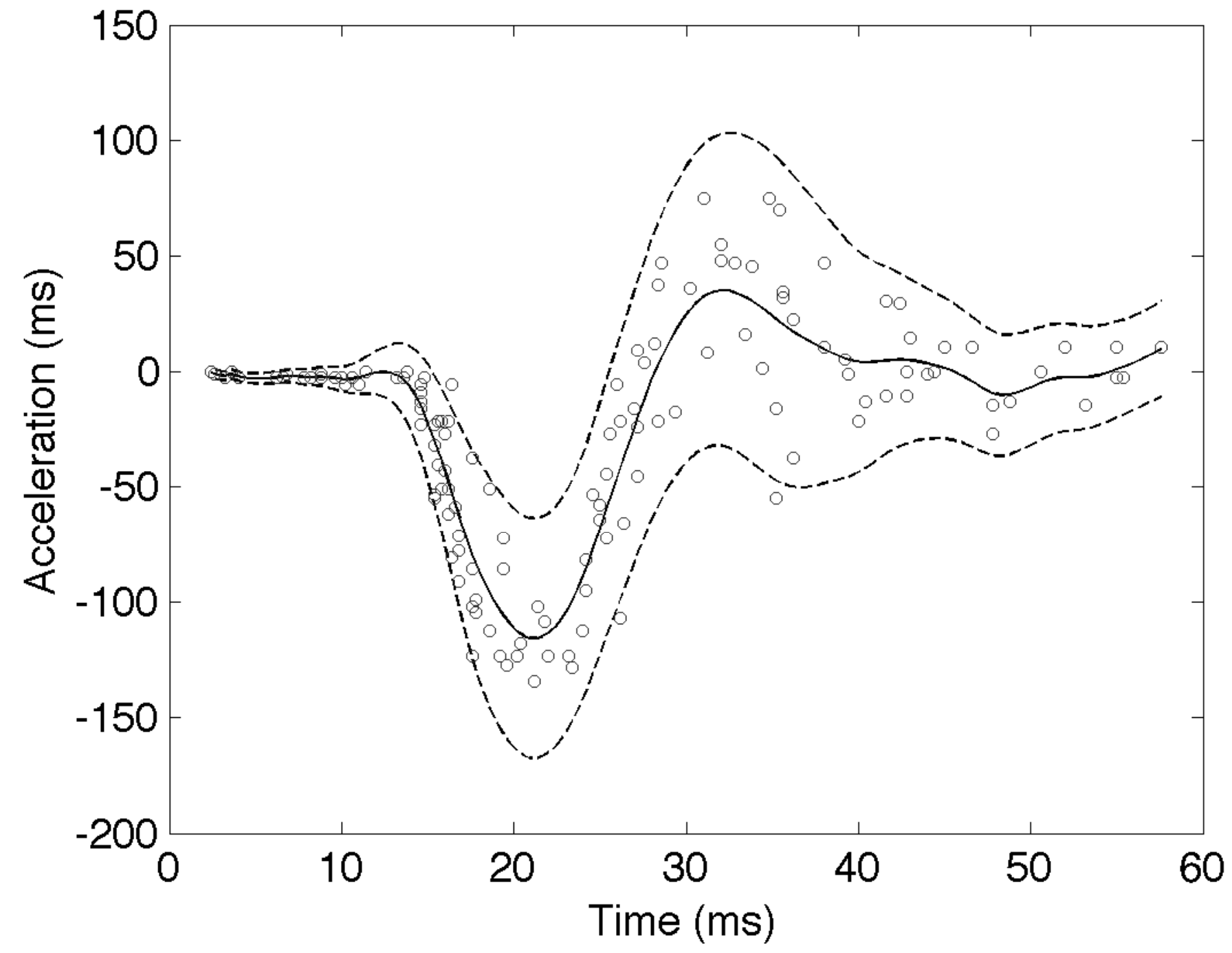}
 \caption{Data and results for example 2. The black dots are measurements, the solid line is the posterior mean of $X(t)$, and the dashed lines are the posterior mean of $ X(t) \pm 1.96 e^{-V(t)}$. }
 \label{fig:data2}
 \end{figure}

\begin{figure}[htb]
\begin{center}
 \begin{minipage}[b]{0.4\linewidth}
 \centering
 (a)
 \includegraphics[width=\linewidth]{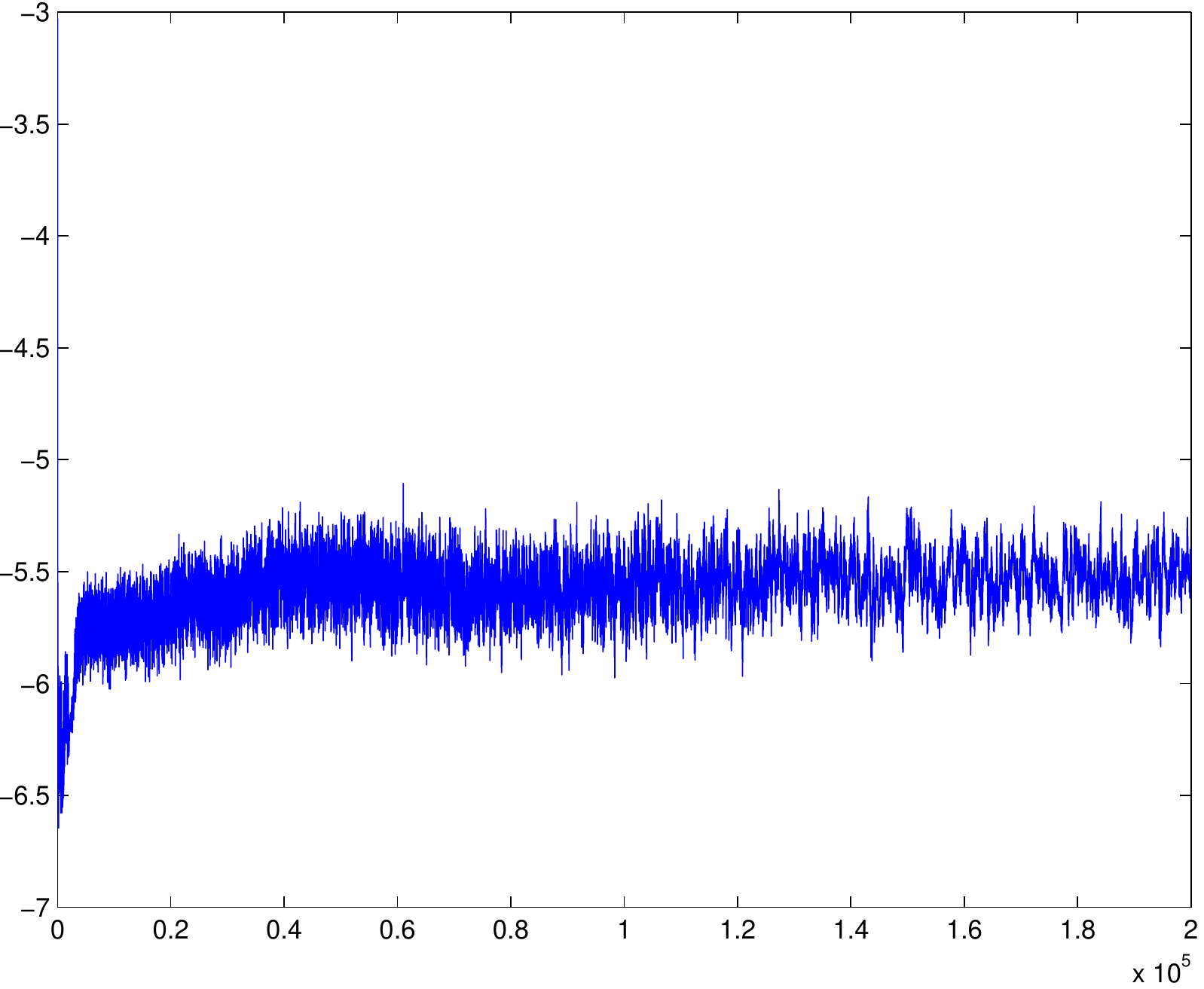}\\
 \end{minipage}
 \begin{minipage}[b]{0.4\linewidth}
 \centering
 (b)
 \includegraphics[width=\linewidth]{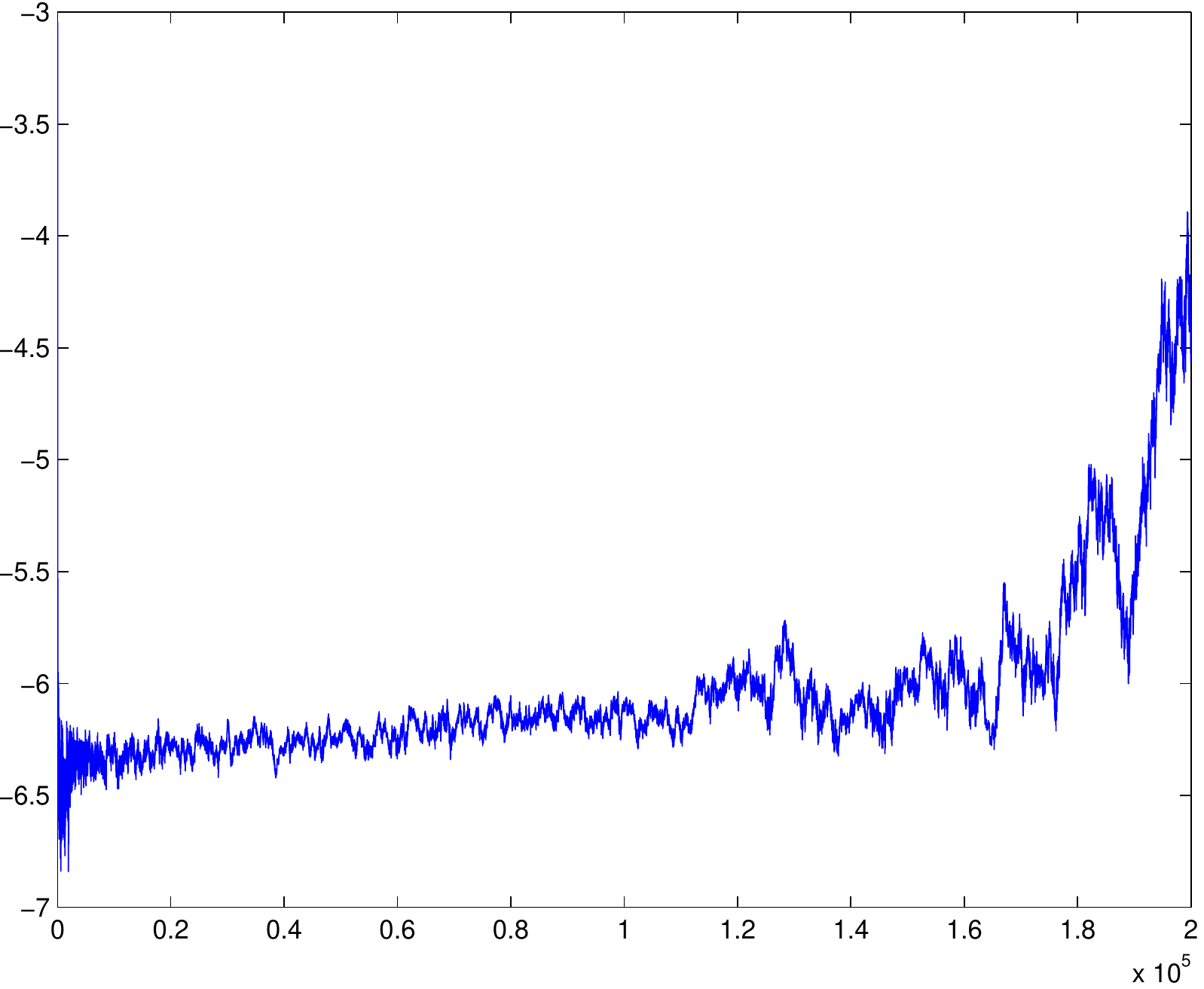}\\
 \end{minipage}\\
 \begin{minipage}[b]{0.4\linewidth}
 \centering
 (c)
 \includegraphics[width=\linewidth]{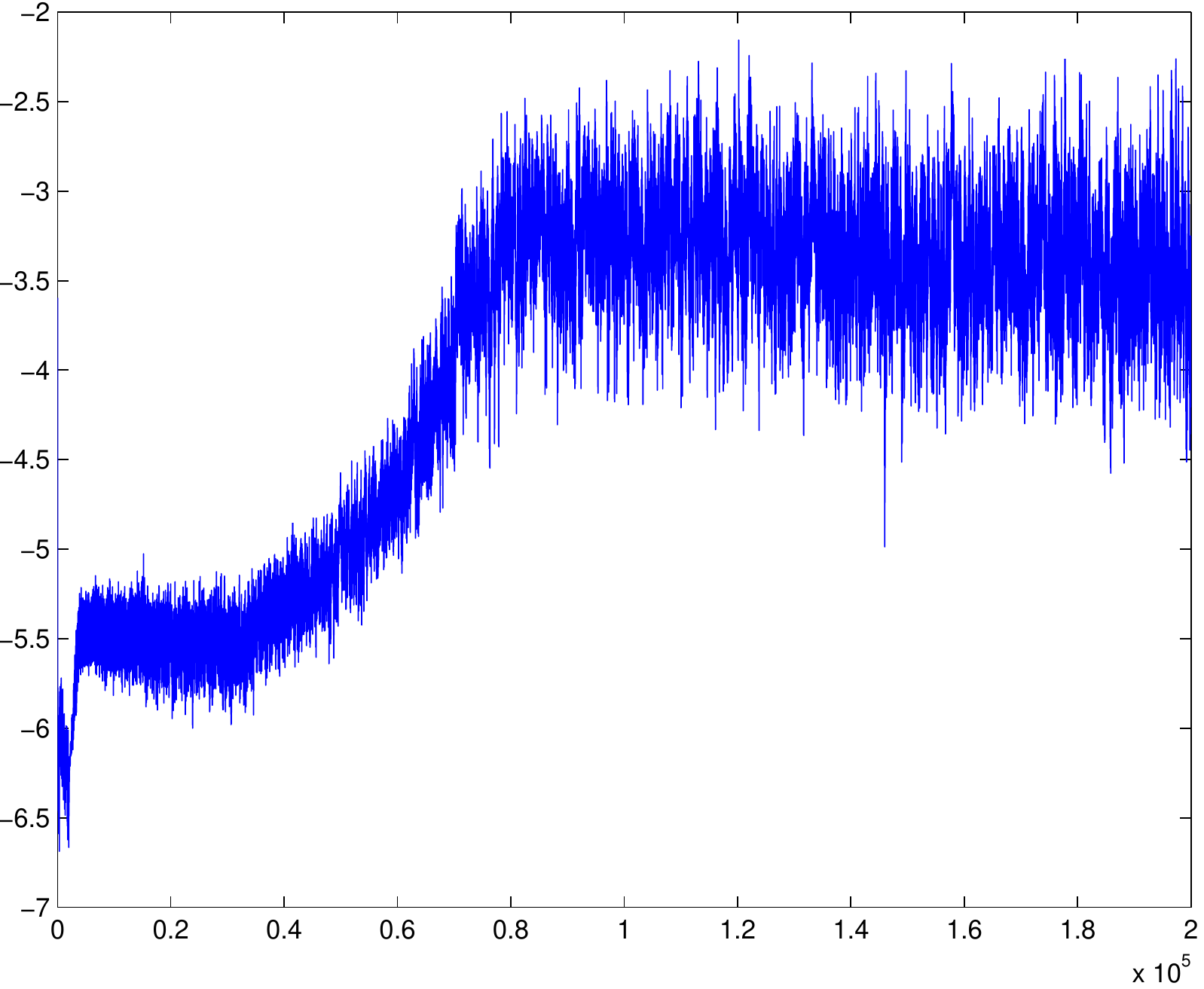}\\
 \end{minipage}
 \begin{minipage}[b]{0.4\linewidth}
 \centering
 (d)
 \includegraphics[width=\linewidth]{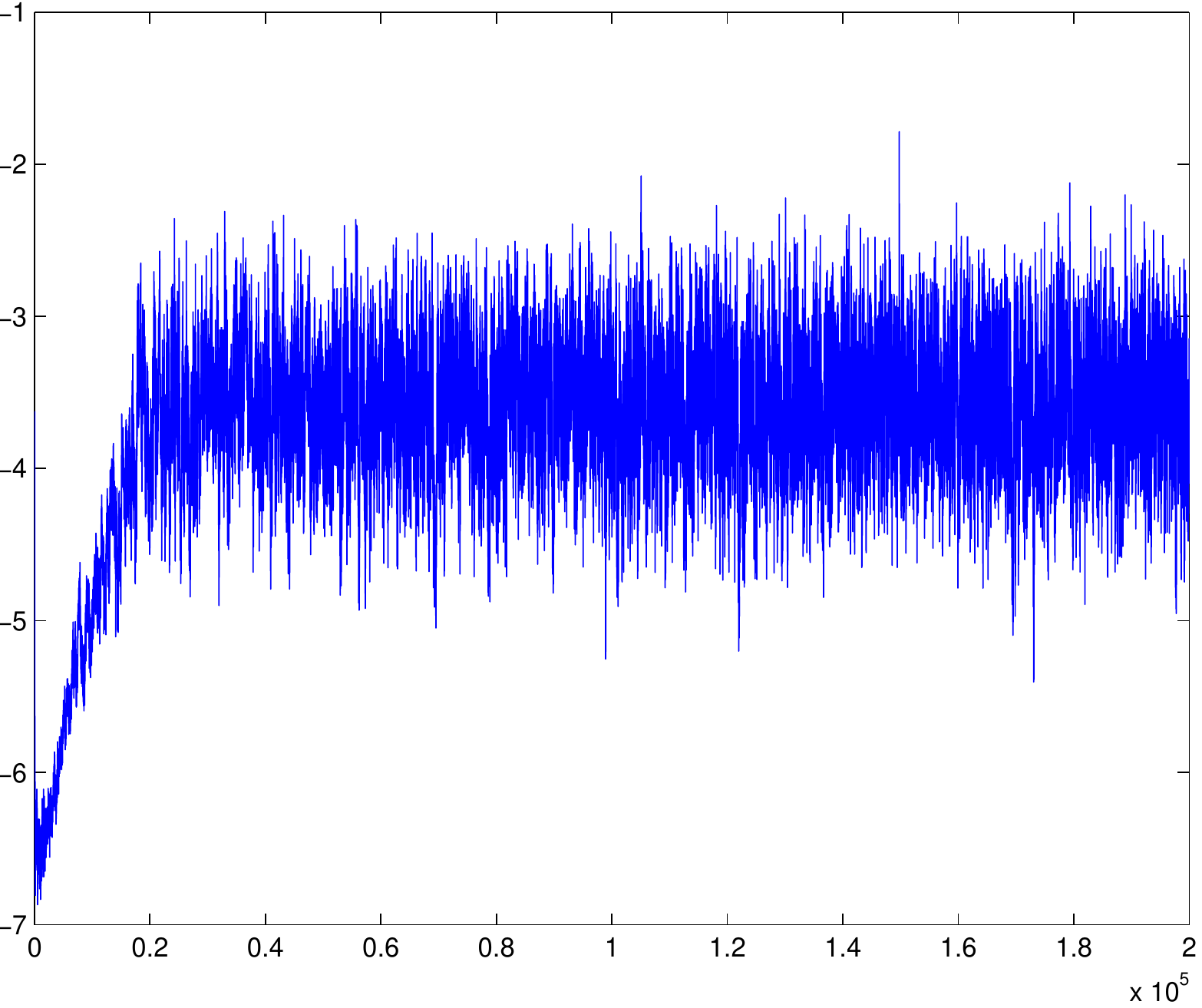}\\
 \end{minipage}
 \end{center}
        \caption{ The panels show trace plots of $\log(\tau_v)$. The samples are taken from runs of ten million MCMC iteration where every fiftieth sample is stored. The methods shown are MHRW with covariance adaptation (Panel a), MHRW with precision adaption (Panel b), MALA with covariance adaption (Panel c), and  MALA with precision adaption (Panel d). }
                \label{fig:trace_tau}
 \end{figure}

 \begin{figure}[htb]
\begin{center}
 \begin{minipage}[b]{0.45\linewidth}
 \centering
 (a)
 \includegraphics[width=\linewidth]{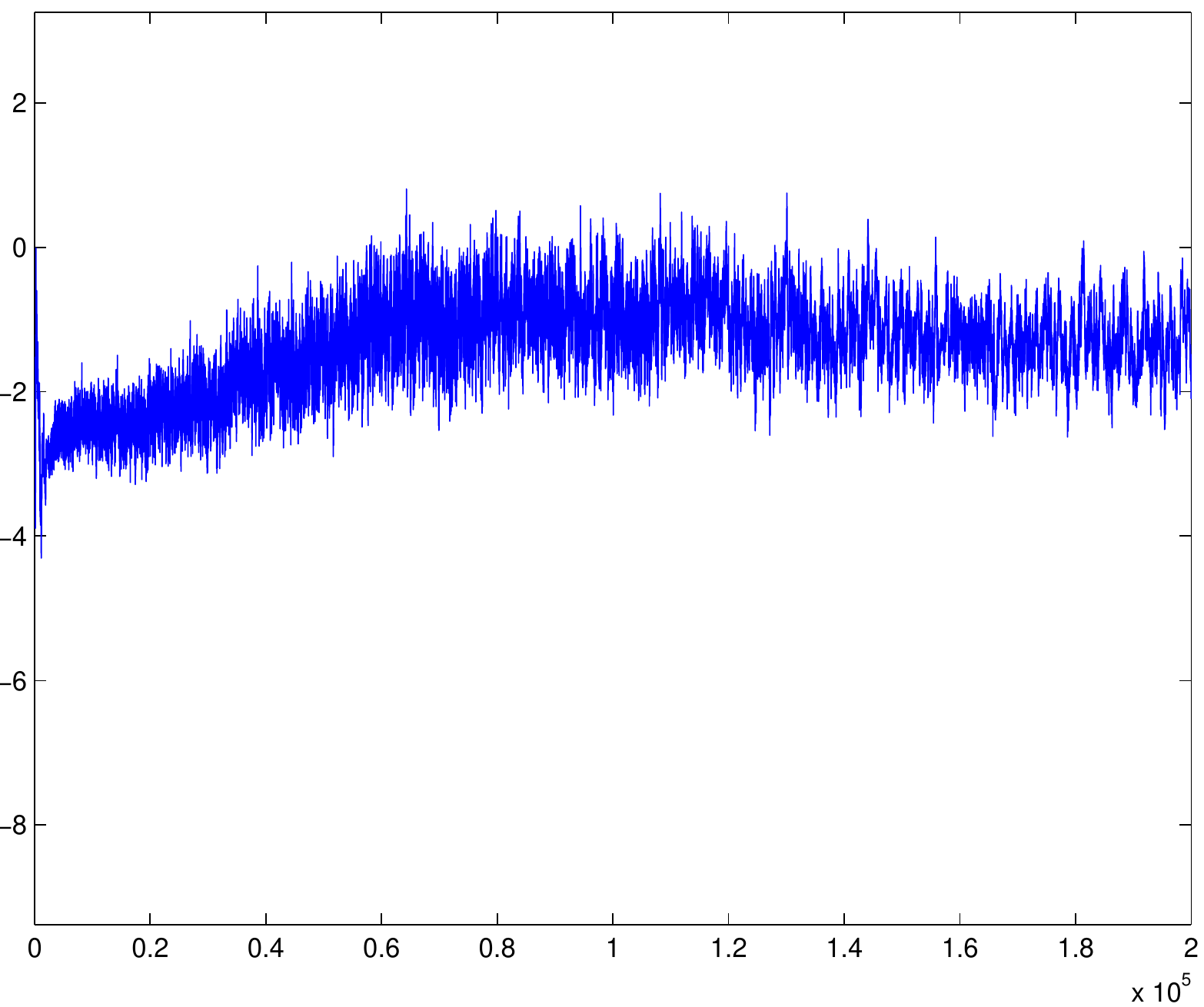}\\
 \end{minipage}
 \begin{minipage}[b]{0.45\linewidth}
 \centering
 (b)
 \includegraphics[width=\linewidth]{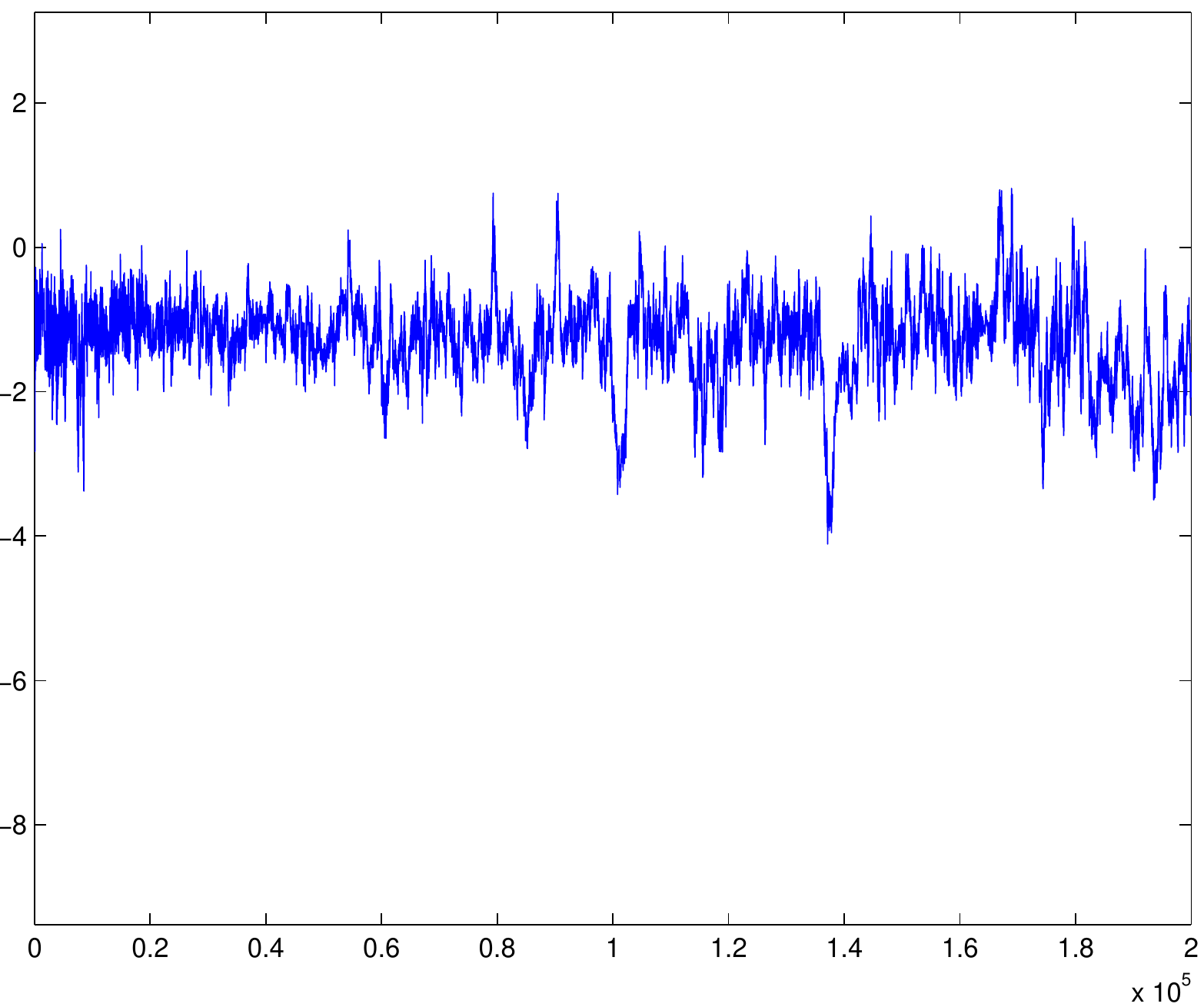}\\
 \end{minipage}\\
 \begin{minipage}[b]{0.45\linewidth}
 \centering
 (c)
 \includegraphics[width=\linewidth]{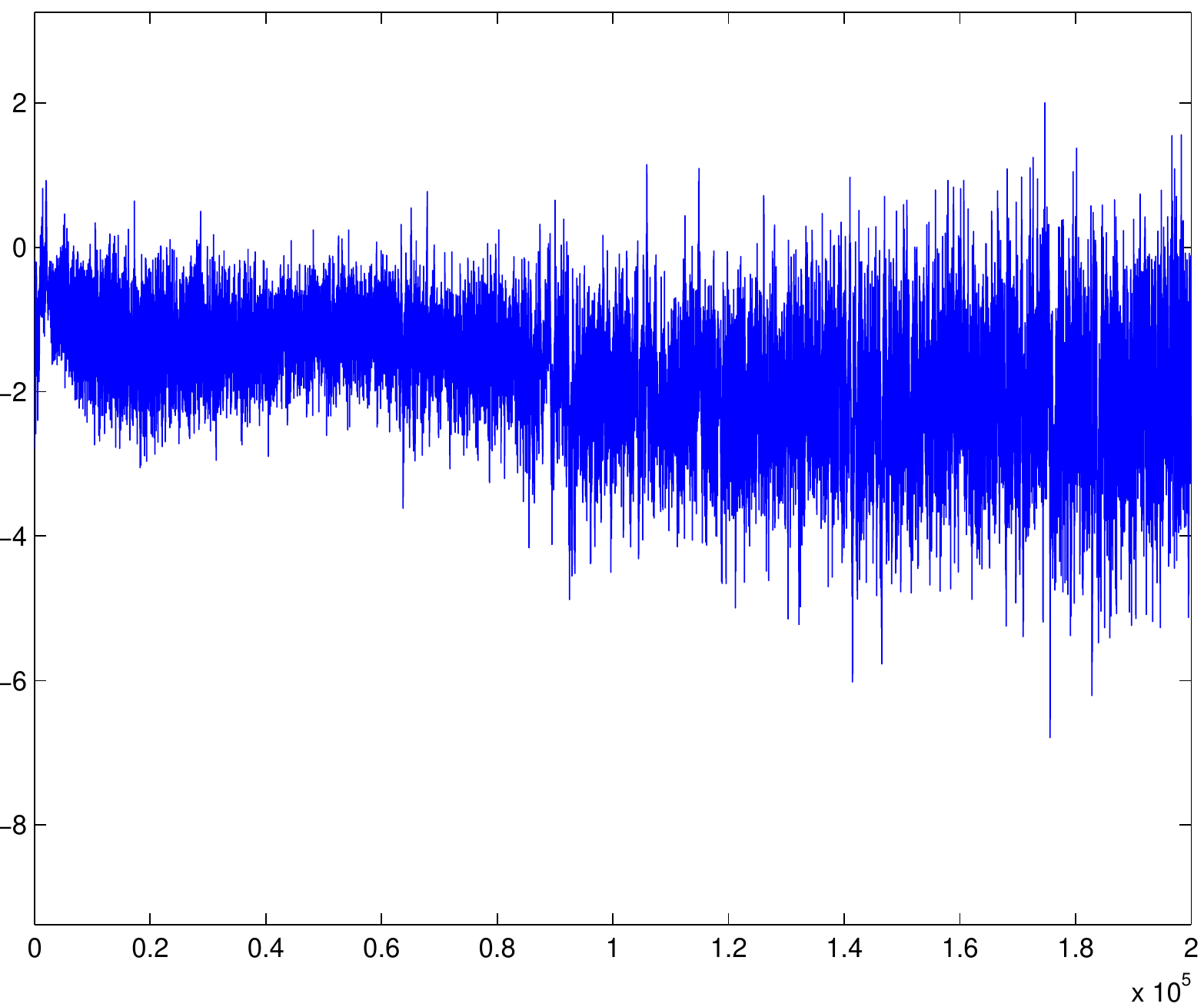}\\
 \end{minipage}
 \begin{minipage}[b]{0.45\linewidth}
 \centering
 (d)
 \includegraphics[width=\linewidth]{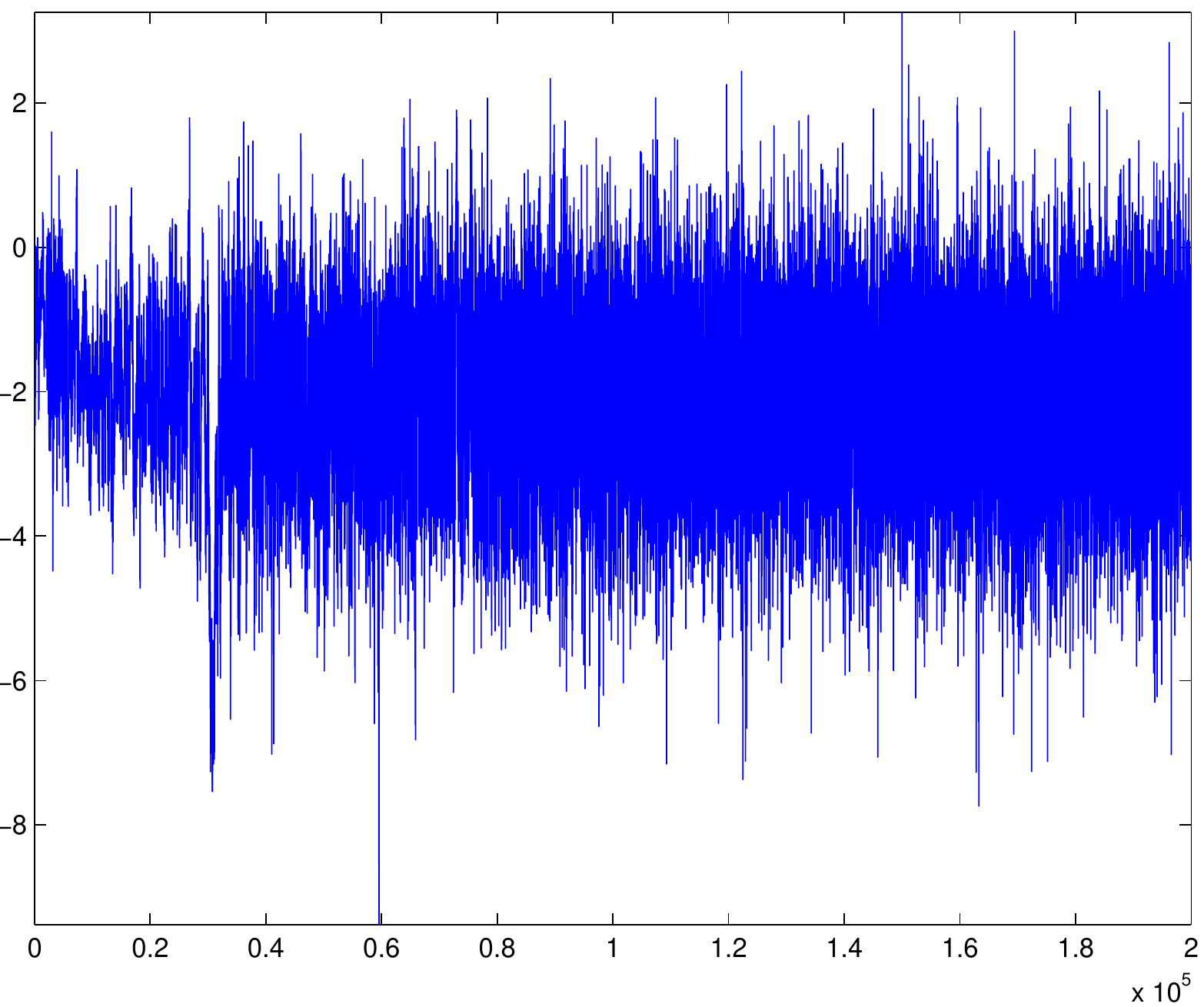}\\
 \end{minipage}
 \end{center}
         \caption{ The panels show trace plots of $x_{21}$ $(X(6.8))$. The samples are taken from runs of ten million MCMC iteration where every fiftieth sample is stored. The methods shown are MHRW with covariance adaptation (Panel a), MHRW with precision adaption (Panel b), MALA with covariance adaption (Panel c), and  MALA with precision adaption (Panel d). }
         \label{fig:trace_X}
  \end{figure}

 \section{Discussion}
 In this article we have derived an adaptive MCMC algorithm that online estimates the Cholesky factor of the precision matrix of a density using a least squares method. The method thus only assumes existence of the second moment of the target distribution and does not rely on any Gaussianity assumptions. We have shown that by taking advantage of the dependency structure of the target density the algorithm can generate an efficient sampling method. The updating of each row in the Cholesky factor is done independently and the method is thus easy to parallelize. Furthermore, the method could be extended so that one updates the Cholesky factor from several independent MCMC trajectories, similarly to the methods by \cite{solonen2012}.


 An interesting research direction is to try to formulate the regression problem for constructing the Cholesky factor such that the formulation varies over the parameter space. For instance, in the first example we used a fixed parameter of the measurement error, $\sigma$, which in most applications instead would be a random variable. Since the smaller $\sigma$ is relative to $\mv{Q}$, the more the posterior distribution of the random field $\mv{x}$ is concentrated around the observations. Thus a proposal for $\mv{x}$ should vary scale with $\sigma$. Which could be constructed by allowing the elements of $\mv{D}$ to vary with $\sigma$.

\section*{Acknowledgements}
Both authors have been supported by the Knut and Alice Wallenberg foundation and the first author has been supported by the Swedish Research Council Grant 2008-5382.

 \begin{appendix}
 \section{Convergence of precision adaption} \label{sec:app_conv}
 From the arguments by \cite{EAMCMC_Roberts}, we need to ensure two conditions to get theoretical justification of using a AMCMC algorithm. The two conditions are \textit{Diminsihing Adapation} and \textit{Bounded convergence}.

 The precision adaptation method satisfies Diminsihing Adapation since the elements updated in Algorithm \ref{alg:Lupdate} only change with $\mathcal{O}(\frac{1}{i})$ at the $i^{th}$ iteration. Bounded convergence is a little more subtle to show and there exists various method to ensure the condition. In the following proposition we link precision adaption with covariance adaption.

 \begin{prop}\label{prop:ContinousQ}
 Let $\mathcal{S}$ be the set of positive definite matrices in $\mathbb{R}^{N \times N}$. Then the \textproc{L-update} function in Algorithm \ref{alg:Lupdate},  is a continuous function from  $\mathcal{S}$ to the set of invertible lower triangular matrices.
 \end{prop}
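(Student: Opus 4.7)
The plan is to view the L-update procedure as a mathematical map $\hat{\mv{\Sigma}} \mapsto \mv{L}$ on $\mathcal{S}$, abstracting away the Sherman-Morrison bookkeeping (which is only an efficient way to compute the inverses of the principal submatrices $\hat{\mv{\Sigma}}_{A_j, A_j}$ that one could equally well form directly from $\hat{\mv{\Sigma}}$). Concretely, for each $j$ the algorithm extracts $\hat{\mv{\Sigma}}_{A_j, A_j}$ and $\hat{\mv{\Sigma}}_{A_j, j}$, forms $\mv{T}_{j, A_j} = \hat{\mv{\Sigma}}_{A_j, A_j}^{-1} \hat{\mv{\Sigma}}_{A_j, j}$ and $\mv{D}_{jj} = \hat{\mv{\Sigma}}_{jj} - \hat{\mv{\Sigma}}_{A_j, j}^{\trsp} \mv{T}_{j, A_j}$, and returns $\mv{L} = \mv{T}^{\trsp} \mv{D}^{-1/2}$.

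First I would check that each atomic operation is continuous on the relevant domain. Submatrix extraction is a coordinate projection, matrix multiplication is jointly continuous, and matrix inversion is continuous on the open set of invertible matrices, since its entries are rational functions whose denominator (the determinant) does not vanish there. Because every principal submatrix of a PD matrix is itself PD and hence invertible, the map $\hat{\mv{\Sigma}} \mapsto \hat{\mv{\Sigma}}_{A_j, A_j}^{-1}$ is continuous on all of $\mathcal{S}$, and so are the entries of $\mv{T}_{j, A_j}$ for every $j$.

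The subtle step, and the one I expect to carry the argument, is showing that $\mv{D}_{jj}$ remains strictly positive on all of $\mathcal{S}$ so that $\mv{D}_{jj}^{-1/2}$ is both well defined and continuous. For this I would invoke the Schur complement identity: $\mv{D}_{jj}$ is precisely the Schur complement of $\hat{\mv{\Sigma}}_{A_j, A_j}$ in the principal submatrix $\hat{\mv{\Sigma}}_{A_j \cup \{j\}, A_j \cup \{j\}}$, which is PD as a principal submatrix of $\hat{\mv{\Sigma}}$. Since the Schur complement of a PD block in a PD matrix is positive, $\mv{D}_{jj} > 0$ throughout $\mathcal{S}$, and composing with $t \mapsto t^{-1/2}$, which is continuous on $(0, \infty)$, yields continuity of $\mv{D}^{-1/2}$.

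Finally, $\mv{L} = \mv{T}^{\trsp} \mv{D}^{-1/2}$ is continuous as a product of continuous factors, lower triangular since $\mv{T}$ is upper triangular with unit diagonal, and invertible because its diagonal entries $\mv{D}_{jj}^{-1/2}$ are strictly positive. This closes the argument; the only nontrivial ingredient is the Schur complement positivity that keeps the scalar square root inside its domain of continuity.
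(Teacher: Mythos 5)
Your proof is correct and follows essentially the same route as the paper's: continuity of the entrywise operations granted invertibility of the principal submatrices $\hat{\mv{\Sigma}}_{A_j\cup\{j\},A_j\cup\{j\}}$, plus the Schur complement argument to keep $\mv{D}_{jj}>0$ so that $\mv{D}^{-1/2}$ is well defined and the resulting triangular factor is invertible. The only cosmetic difference is that you justify invertibility of the principal submatrices by the elementary fact that principal submatrices of a positive definite matrix are positive definite, where the paper cites the Poincar\'e separation theorem for the same conclusion.
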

 \begin{proof}
 To show that  \textproc{L-update} is a continuous function, one only need to show that the operations in line 7 and 8 are continuous, which follows trivially if  $\mv{\Sigma}_{(A_i \cup i),(A_i \cup i)}$ is invertible. That $\mv{\Sigma}_{(A_i \cup i),(A_i \cup i)}$ is invertible follows from the Poincar\'e separation theorem \citep{magnus1988matrix}.

 To show that \textproc{L-update} produces an invertible matrix it is enough to show that $\mv{D}_{jj}>0$ (row 8) for  $j=1,\ldots,N$. This follows from that $\mv{D}_{jj}$ is the Schuur complement of  $\mv{\Sigma}_{A_i ,A_i }$ in the matrix $\mv{\Sigma}_{(A_i \cup i),(A_i \cup i)}$, which is positive definite. \qed
 \end{proof}

 Using Proposition \ref{prop:ContinousQ} it is enough to show that the empirical covariance matrix is in a compact domain of positive definite matrices. So for example, if equation (14) in \citep{haario2001} is satisfied, then the precision adaptation satisfies Diminsihing Adapation.

 \end{appendix}



\bibliographystyle{chicago}
 \bibliography{AdaptiveMH.bib}

\end{document}